\numberwithin{equation}{section}
\newtheorem{Thm}{Theorem}[section]
\newtheorem{Cor}[Thm]{Corollary}
\newtheorem{Lem}[Thm]{Lemma}
\newtheorem{Prop}[Thm]{Proposition}
\newtheorem{Rmk}{Remark}
\newtheorem{Exam}[Thm]{Example}
\newcommand{\N}{\mathbb{N}}
\newcommand{\R}{\mathbb{R}}
\renewcommand{\div}{{\rm div}}
\begin{document}

\date{\today}

\title[Limits at infinitely many molecules]{
On Hydrodynamic equations\\ at the limit of 
infinitely many molecules}

\author{S. Dostoglou}
\author{N.C. Jacob}
\author{Jianfei Xue}

\address{Department of Mathematics, 
          University of Missouri, 
          Columbia, MO 65211}

\begin{abstract}
We show that weak convergence of point measures and $(2+\epsilon)$-moment conditions imply hydrodynamic equations at the limit of infinitely many interacting molecules.
The conditions are satisfied whenever the solutions of the classical equations for $N$ interacting molecules obey uniform in $N$ bounds. As an example, we show that this holds when the initial conditions are bounded
and that the molecule interaction, a certain $N$-rescaling of potentials  that include all $r^{-p}$ for $1<p$, is weak enough at the initial time. 
In this case the hydrodynamic equations coincide with the macroscopic equations of Maxwell. 
\end{abstract}

\subjclass[2000]{76, 82C,28, 35, 37J,K}   
         
\maketitle

\section{Introduction}

Derivations of macroscopic hydrodynamic equations from microscopic dynamics go back to the introduction of probabilistic methods in the description of molecular motions in \cite{Max} and continue in now classic works, \cite{B}, \cite{CC}, \cite{IK}, \cite{Gr}, \cite{M}, \cite{L}, to this day, see \cite{EP} and \cite{GK} for recent reviews and references.

In the present article we examine macroscopic hydrodynamic equations as limits of the classical equations of motion for a system of $N$ interacting molecules, as $N$ becomes infinite. In particular, we subscribe to the idea that, whereas a classical system can be fully described by these equations for a finite but extremely high $N$, a reasonable approximation of an observer's macroscopic perception is the limit at infinite $N$. Our motivation has been to substantiate Reynolds's tenuous definition of hydrodynamic averages in \cite{R} and his claim that averages, as in \cite{Max} and \cite{R}, can only be space averages. 
For this, we use the classical equations of molecular motion and we average in space (or time-space) but we do not use the Liouville equation and we do not average in phase space (Gibbs ensembles). At the same time, we make no assumptions of binary collisions, molecular chaos etc. and therefore we do not use the Boltzmann equation. In this sense our work has origins in the first part of Morrey \cite{M} and Jepsen \& ter Haar \cite{JtH}.

Our starting point is the article \cite{D} where weak convergence of empirical position-velocity probability measures on $\R^6$ and disintegration of the limit measure $M$ with respect to its marginal $\mu$, the macroscopic density, provide a rigorous definition of a macroscopic velocity $u$ as the barycentric projection of the disintegration (formulas (3.14) and (5.27), loc. cit.). The tools there are from \cite{AGS}. 
The kinetic energy of $u$ is, in general, only part of the total kinetic energy of the macroscopic system (formulas (5.11) and (5.12) in \cite{D}).
This allows for part the remaining total kinetic energy at the limit to include heat and, possibly, other fluctuations. 
Following Morrey \cite{M}, it is assumed in \cite{D} that the total mass, energy, and moment of inertia stay bounded in $N$ and, to deal with the non-linear terms,  that second moments locally converge.  
With these, \cite{D} shows how the limits of equations for $N$ molecules, rescaled by a factor $\sigma_N$ at each $N$ as in \cite{M}, can give at the limit $N \to \infty$ weak versions of macroscopic equations for the limit molecule density $\mu$ and the mean velocity $u$. 
For certain interaction potentials only the divergence of the stress tensor appears (in weak form) in the resulting equations.

The aim of the present article
is to
show that the weak convergence of point measures and a uniform bound on their $(2+\epsilon)$-moments imply hydrodynamic equations at the limit of infinitely many interacting molecules and to provide examples satisfying the two assumptions. We also deal with two points that were not addressed in  \cite{D}: First, the existence of subsequences of point measures weakly convergent for all $t$ in some $[0,T]$. Second, the measurability in $t$ and the regularity of the macroscopic velocity $u$. We obtain  the macroscopic equations in section \ref{General}, first in Theorem \ref{MainMain}
using measures on $[0,T]\times\R^6$. Both issues of common in $t$ subsequences and the measurability of $u$ are then overcome. The Morrey assumptions used in \cite{D} satisfy the assumptions of Theorem \ref{MainMain}. Then in Theorem \ref{Main} we
consider $t$-families of measures on $\R^6$ that have common weakly convergent subsequences. As the assumptions of Theorem \ref{MainMain} are satisfied whenever the assumptions of Theorem \ref{Main} are, the measurability of $u$ is determined via comparison with Theorem \ref{MainMain}. 
We do not insist here on the form of the interaction part of the stress tensor -- for the examples that show later in Section \ref{PicardExample} this term does not appear at the limit.
Our main tools are the general Fubini theorem for families of measures, disintegration of measures, and convergence of measures. 

To substantiate the assumptions of the main theorems, we show that the conditions of Theorem \ref{Main}, and therefore of Theorem \ref{MainMain}, are satisfied at least when the solutions of the  classical equations for $N$ interacting molecules, rescaled as in \cite{M}, have accelerations uniformly bounded in $N$ on any finite time interval. This occupies the last subsection of  section \ref{General}. The main tool here is L\'evy continuity.
 
Finally, to provide explicit examples, we show in section \ref{PicardExample} that on any fixed time interval $[0,T]$, when the initial positions and velocities are bounded
and the rescaled interaction is weak enough compared to the initial conditions,  the accelerations are indeed uniformly bounded in $N$ on $[0,T]$. 
The choice of scale for each $N$ is such that distances between molecules can only increase in time. The main argument here is by continuity. (Under the same assumptions Picard iteration constructs the solutions on any finite time interval.)
These examples include, but are not restricted to, certain types of ``burst" configurations, i.e. examples where the velocities are any positive multiple of the positions at the initial moment in time. 

On the other hand, the choice of scale in section \ref{PicardExample} weakens the interaction and, for the examples in this section, the interaction term vanishes at the limit of infinitely many particles.
The stress tensor then consists only of velocity fluctuations. 
Therefore the hydrodynamic equations we obtain coincide with the macroscopic equations of Maxwell in \cite{Max}. In particular,  for the examples in this section we show that on any $[0,T]$ the momentum equation is precisely in the form established by Maxwell in \cite{Max} eq. (76), nowadays derived via the Boltzmann equation as in \cite{Gr}, eq. (2.46), for example\footnote{Recall that the validity of the Boltzman equation has been established only for a fraction of the collision time.}. As Maxwell argues, this form of the momentum equation can be approximated, up to certain order and for ``quiet'' flows, by the compressible Navier-Stokes equations.

\section{Hamiltonian equations} \label{Basics}

\subsection{Equations of motion}
We start with the motion of $N$ classical molecules, each of mass $m_N$, without external forces, and with pair interaction potential energy between a molecule at $x$ and a molecule at $y$ equal to $m_N^2  \Phi_N(|x - y|)$. Assuming always $\Phi_N$ of negative derivative for small distances, where molecules repulse, the force on a molecule at $x$ from a molecule at $y$ is $ -{m_N^2} \nabla_x \Phi_N(|x - y|)$, and the acceleration of the $i$-th molecule at time $t$, when its position is $x_i(t)$, satisfies
\begin{equation} \label{momentumeq}
\begin{split}
     &\ \ {m_N} u_i'(t)
     = 
     -m_N^2 \nabla_{x_i} 
     \sum_{\substack{j =1\\j\neq i}}^{N}
    \Phi_N\left(|x_i(t) - x_j(t)|\right).
\end{split}
\end{equation}
The total energy of the system consisting of these $N$ molecules when their positions and velocities are $x_i(t)$, $u_i(t)$ is 
\begin{equation}
\begin{split}
    E_N = \frac12
    {m_N}
     \sum_{i=1}^{N} |u_i(t)|^2
    +
    m_N^2
    \sum_{\substack{i,j =1\\i\neq j}}^{N}
    \Phi_N\left(|x_i(t) - x_j(t)|\right).
\end{split}
\end{equation}

\subsection{A length scale from the $N$-molecule system}
Recall that during a head-on collision between two molecules of mass $m_N$ (i.e. collision with impact parameter $0$) their minimum distance $\sigma_N$ satisfies $v_{\infty}^2 = {m_N}\Phi_N(\sigma_N)$, when the interaction potential is $m_N^2 \Phi_N(r)$ and for $v_{\infty}$ the molecules' relative speed at $t \to \pm \infty$, see \cite{LL}, \S 18.   
For $v_{\infty}$ independent of the $N$, this reads $m_N \Phi_N(\sigma_N) = \text{constant}$, and one way to accommodate this is to set
\begin{equation}  \label{MorreyScaling}
\begin{split}
     {m_N}\Phi_N(r) = \Phi\left( \frac{r}{\sigma_N}\right),
\end{split}
\end{equation}
cf. \cite{ABGS}.
This is what Morrey in \cite{M} adopts and this is what we shall also adopt again here.
Then \eqref{momentumeq} becomes
\begin{equation} \label{Hamp}
\begin{split}
    \frac{d x_i}{dt}
    &=
    u_i,
    \\ 
    \frac{du_i}{dt}
    &=
    -
    \sum_{\substack{j=1\\ j\neq i}}^{N}
    \frac{1}{\sigma_N}
    \Phi' 
    \left( 
    \frac{|x_i(t) - x_j(t)|}{\sigma_N}
    \right)
    \frac{x_i(t) - x_j(t)}
     {|x_i(t) - x_j(t)|},\quad
   1 \leq i \leq N.
\end{split}
\end{equation}

As energy is conserved, standard theory of ordinary differential equations, see for example \cite{CL}, p. 7, p. 47, and \cite{C}, p. 110, gives the following:
\begin{Thm}
Let ${\frak G}\subset \R^{3N}$ be
$   {\frak G}
    =
    \{
    (x_1,\ldots,x_N): x_i \in \R^3, i\neq j \Rightarrow x_i \neq x_j
    \} $.
Then if $\Phi$ has locally Lipshitz-continuous derivative on $(0,\infty)$ the  initial value problem for \eqref{Hamp} on ${\frak G}\times \R^{3N}$ has, for each $N$, unique solution on any time interval $[0,T]$.
\end{Thm}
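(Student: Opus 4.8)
The plan is to treat \eqref{Hamp} as an autonomous first-order system and to combine the standard Picard--Lindel\"of and continuation theorems with the conservation of $E_N$ to rule out finite-time singularities. First I would write $z=(x_1,\dots,x_N,u_1,\dots,u_N)$ and recast \eqref{Hamp} as $\dot z = F(z)$ on the open set ${\frak G}\times\R^{3N}\subset\R^{6N}$. The only non-smoothness of $F$ comes from the interaction terms
\[
(x_i,x_j)\longmapsto \frac{1}{\sigma_N}\,\Phi'\!\left(\frac{|x_i-x_j|}{\sigma_N}\right)\frac{x_i-x_j}{|x_i-x_j|}.
\]
On any compact subset $K$ of ${\frak G}\times\R^{3N}$ the distances $|x_i-x_j|$ are bounded below by some $\delta_K>0$, so the argument of $\Phi'$ ranges over a compact subinterval of $(0,\infty)$; since $\Phi'$ is locally Lipschitz there and $(x_i,x_j)\mapsto(x_i-x_j)/|x_i-x_j|$ is smooth for $|x_i-x_j|\ge\delta_K$, the map $F$ is Lipschitz on $K$. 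Thus $F$ is locally Lipschitz on the whole domain, and Picard--Lindel\"of (\cite{CL}, p.~7) yields, for each initial datum, a unique solution on a maximal interval $[0,T_{\max})$. The continuation theorem (\cite{CL}, p.~47, and \cite{C}, p.~110) states that if $T_{\max}<\infty$ then the trajectory eventually leaves every compact subset of ${\frak G}\times\R^{3N}$; it therefore suffices to produce, on each $[0,T]$, a compact subset of the domain that the trajectory cannot exit.

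The a priori bounds come from the conservation of energy. A direct differentiation of $E_N$, using \eqref{Hamp}, gives $dE_N/dt=0$, so $E_N$ keeps its value at $t=0$ along the solution. Because the repulsive potential $\Phi$ is bounded below (it is positive for the examples of interest, such as $r^{-p}$), the interaction sum is bounded below, and hence the kinetic term $\tfrac12 m_N\sum_i|u_i(t)|^2$ is bounded above by a constant depending only on the initial data; this gives a uniform bound $\sup_t\max_i|u_i(t)|\le C$. Integrating $dx_i/dt=u_i$ then yields $|x_i(t)|\le|x_i(0)|+CT$ on $[0,T]$, so both positions and velocities remain in a fixed bounded region of $\R^{6N}$.

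The step I expect to be the main obstacle is keeping the trajectory away from the collision set, i.e. bounding $\min_{i\neq j}|x_i(t)-x_j(t)|$ below on $[0,T]$. Here the repulsive short-range behaviour of $\Phi$ is essential: as two molecules approach, $\Phi(|x_i-x_j|/\sigma_N)\to+\infty$, so the potential energy would blow up, contradicting the constancy of $E_N$; consequently there is $\delta>0$ with $\min_{i\neq j}|x_i(t)-x_j(t)|\ge\delta$ for all $t\in[0,T]$. Together with the bounds of the previous paragraph this confines the trajectory to a compact subset of ${\frak G}\times\R^{3N}$, so by the continuation theorem $T_{\max}>T$ and the solution exists on all of $[0,T]$; uniqueness is already furnished by Picard--Lindel\"of. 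The one place where care is genuinely needed is exactly this collision exclusion, which is where the assumed repulsion of $\Phi$ (its unboundedness near $0$) is used, and which the purely local Lipschitz hypothesis alone does not supply.
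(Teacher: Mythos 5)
Your proposal is correct and follows essentially the same route as the paper: the paper's entire proof is the one-line observation that conservation of $E_N$ together with the standard local existence/uniqueness and continuation theorems (the very references \cite{CL}, p.~7 and p.~47, and \cite{C}, p.~110, that you invoke) yields the result, and your write-up simply supplies the details of that argument. Your closing caveat is also well taken: the collision exclusion genuinely requires that the repulsive potential blow up at $0$ (implicit in the paper's standing assumption that $\Phi$ repulses at short range, and true for the $r^{-p}$ examples), since the locally Lipschitz hypothesis stated in the theorem would not by itself prevent the trajectory from reaching the boundary of ${\frak G}$ in finite time.
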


\section{General results} \label{General}

\subsection{Standard measure theory}\
Weak convergence of measures is important in what follows: a sequence of measures $\mu_N$ converges weakly to a measure $\mu$, or $\mu_N \Rightarrow \mu$, if $  \displaystyle   \int f(x) \mu_N(dx) \to \int f(x) \mu(dx), 
     \ N \to \infty
$ for any $f$ bounded and continuous.   
The following is standard:
\begin{Lem}\label{pmom}
For any sequence of positive measures $\mu_N$ and $f$ measurable
\begin{equation}
\begin{split}
     \sup_N
     \int |f(x)|^{p + \varepsilon} \mu_N(dx)
     < 
     \infty
     \ \Rightarrow\ 
     \lim_{R\to \infty}
     \int\limits_{\{|f(x)|^p > R\}}
      |f(x)|^{p} \mu_N(dx) \to 0, \ \text{uniformly in $N$}.
\end{split}
\end{equation}
\end{Lem}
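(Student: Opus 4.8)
The plan is to establish a pointwise bound on the truncated integrand valid on the tail set $\{|f|^p > R\}$ and then integrate, exploiting the extra $\varepsilon$ of integrability to manufacture a decaying prefactor that does not depend on $N$. Set $C := \sup_N \int |f(x)|^{p+\varepsilon}\,\mu_N(dx)$, which is finite by hypothesis; this single constant will dominate everything.

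The key observation is that one power of $|f|$ can be traded against the threshold $R$. On the set $\{|f(x)|^p > R\}$ we have $|f(x)| > R^{1/p}$, hence $|f(x)|^{\varepsilon} > R^{\varepsilon/p}$, equivalently $|f(x)|^{-\varepsilon} < R^{-\varepsilon/p}$ there. I would then factor the integrand as $|f|^p = |f|^{p+\varepsilon}\,|f|^{-\varepsilon}$ and substitute this bound:
\begin{equation*}
\int\limits_{\{|f|^p > R\}} |f|^p \,\mu_N(dx)
= \int\limits_{\{|f|^p > R\}} |f|^{p+\varepsilon}\,|f|^{-\varepsilon}\,\mu_N(dx)
\le R^{-\varepsilon/p} \int\limits_{\{|f|^p > R\}} |f|^{p+\varepsilon}\,\mu_N(dx)
\le \frac{C}{R^{\varepsilon/p}},
\end{equation*}
where the final inequality merely enlarges the domain of integration to the whole space and invokes the definition of $C$.

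Since the resulting bound $C\,R^{-\varepsilon/p}$ depends on neither $x$ nor $N$ and tends to $0$ as $R \to \infty$, I would take the supremum over $N$ and then let $R \to \infty$ to obtain the asserted uniform-in-$N$ limit. I do not anticipate any genuine obstacle: the argument is the classical de la Vall\'ee Poussin uniform-integrability criterion, and the only point requiring a word of care is that the factorization uses $|f|^{-\varepsilon}$, which is harmless on the tail set because $R > 0$ forces $|f| > 0$ there. The mild subtlety worth flagging is purely one of reading the statement: ``$\to 0$ uniformly in $N$'' is understood as $\sup_N \int_{\{|f|^p > R\}} |f|^p\,\mu_N(dx) \to 0$ as $R \to \infty$, which is exactly what the displayed estimate delivers.
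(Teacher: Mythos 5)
Your proof is correct and is essentially identical to the paper's own argument: both trade the extra $\varepsilon$ power of $|f|$ against the threshold $R$ on the tail set, yielding the bound $R^{-\varepsilon/p}\sup_N \int |f|^{p+\varepsilon}\,\mu_N(dx)$, which is uniform in $N$ and vanishes as $R \to \infty$. The only cosmetic difference is that the paper multiplies the tail integral by $R^{\varepsilon/p}$ rather than factoring out $|f|^{-\varepsilon}$, but the inequality is the same.
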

\begin{proof}
$\displaystyle
     R^{\varepsilon/p}
     \int\limits_{\{|f(x)|^p>R \}}
      |f(x)|^{p} \mu_N(dx)
      \leq
     \int\limits_{\{ |f(x)|^p>R \}}
      |f(x)|^{p+\varepsilon} \mu_N(dx)
      \leq
      \int
      |f(x)|^{p+\varepsilon} \mu_N(dx) . 
$
\end{proof}
Following \cite{AGS} Chapter 5, we call $f\geq 0$ uniformly integrable with respect to $\{\mu_N\}$ if
\begin{equation}
\begin{split}
     \lim_{R\to \infty}
     \int\limits_{\{ f >R \}}
      f(x) \mu_N(dx) \to 0, \ \text{uniformly in $N$}.
\end{split}
\end{equation}
Therefore Lemma \ref{pmom} provides a sufficient condition for $|f|^p$ to be uniformly integrable. 
The following shows as part of Lemma 5.1.7 in \cite{AGS}:
\begin{Lem} \label{continuousintegrand}
Let $\mu_N$ weakly converge to $\mu$ and $f$ continuous with $|f|$ uniformly integrable with respect to 
$\mu_N$. Then the $\mu_N$-integrals of $f$ converge (without passing to subsequence):
\begin{equation}
\begin{split}
     \int f(x) \mu_N(dx) \to \int f(x) \mu(dx), 
     \ N \to \infty. 
\end{split}
\end{equation}
\end{Lem}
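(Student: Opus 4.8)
The plan is to reduce to the case of bounded continuous integrands, for which weak convergence applies directly, by truncating $f$ and controlling the truncation error through uniform integrability. First I would introduce the truncation $f_R = \max(-R,\min(f,R))$, which is bounded and continuous whenever $f$ is, and record the pointwise identity $|f - f_R| = (|f|-R)^+$, so that the truncation error is everywhere dominated by $|f|$ on $\{|f|>R\}$. Then for each fixed $R$ I would split, via the triangle inequality,
\begin{align*}
\left|\int f\,\mu_N(dx) - \int f\,\mu(dx)\right|
&\le
\int (|f|-R)^+\,\mu_N(dx)
+ \left|\int f_R\,\mu_N(dx) - \int f_R\,\mu(dx)\right| \\
&\quad + \int (|f|-R)^+\,\mu(dx).
\end{align*}

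The middle term tends to $0$ as $N\to\infty$ for each fixed $R$, directly by weak convergence, since $f_R$ is bounded and continuous. The first term is bounded by $\int_{\{|f|>R\}}|f|\,\mu_N(dx)$, which tends to $0$ uniformly in $N$ as $R\to\infty$ by the hypothesis that $|f|$ is uniformly integrable with respect to $\{\mu_N\}$. An $\varepsilon/3$ argument then closes the proof: choose $R$ large enough to make the outer two terms small uniformly in $N$, and then $N$ large enough to make the middle term small for that now-fixed $R$.

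The one point requiring care, and the main obstacle, is the third term: uniform integrability is assumed only against the sequence $\{\mu_N\}$, not against the limit $\mu$, so I must separately show $\int (|f|-R)^+\,\mu(dx)\to 0$ as $R\to\infty$. To transfer the bound to $\mu$ I would approximate $(|f|-R)^+$ from below by the bounded continuous functions $\psi_{R,S} = (|f|\wedge S) - (|f|\wedge R)$ with $S>R$, which increase to $(|f|-R)^+$ as $S\to\infty$. For each fixed $S$, weak convergence gives $\int \psi_{R,S}\,\mu(dx) = \lim_N \int \psi_{R,S}\,\mu_N(dx) \le \sup_N \int_{\{|f|>R\}}|f|\,\mu_N(dx)$, and letting $S\to\infty$ by monotone convergence yields $\int (|f|-R)^+\,\mu(dx) \le \sup_N \int_{\{|f|>R\}}|f|\,\mu_N(dx)$. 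Thus uniform integrability against $\{\mu_N\}$ automatically controls the limit measure, so the third term is handled by the very same choice of $R$ as the first. As a byproduct this also shows that all the integrals involved are finite, since $|f|\le |f_R| + (|f|-R)^+$ with $f_R$ bounded and $(|f|-R)^+$ integrable for $R$ large, which justifies the manipulations above.
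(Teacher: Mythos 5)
Your proof is correct. Note, however, that the paper does not actually prove this lemma: it invokes it as part of Lemma 5.1.7 in the reference [AGS] (Ambrosio--Gigli--Savar\'e), so there is no internal argument to compare against; your truncation proof is a self-contained substitute for that citation, and it is essentially the standard argument behind the cited result. The decomposition into $f_R=\max(-R,\min(f,R))$ plus the error $(|f|-R)^+$, with weak convergence handling the bounded continuous middle term and uniform integrability handling the $\mu_N$-errors, is exactly the right skeleton. The one genuinely delicate point --- that uniform integrability is assumed only along the sequence $\{\mu_N\}$ and must be transferred to the limit measure $\mu$ before the third term can be absorbed --- is where a careless write-up would have a gap, and you close it correctly: the monotone approximation $\psi_{R,S}=(|f|\wedge S)-(|f|\wedge R)$ is bounded and continuous, so weak convergence gives $\int \psi_{R,S}\,\mu(dx)\le \sup_N\int_{\{|f|>R\}}|f|\,\mu_N(dx)$, and monotone convergence in $S$ then yields $\int (|f|-R)^+\,\mu(dx)\le \sup_N\int_{\{|f|>R\}}|f|\,\mu_N(dx)$, so the same choice of $R$ controls both outer terms. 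What your route buys over the paper's is self-containedness at the cost of length; what the citation buys is brevity and, in [AGS], a slightly more general statement (stated there for narrowly convergent sequences and functions that need only be lower semicontinuous in the one-sided version). One small remark for completeness: finiteness of $\int |f|\,\mu_N(dx)$ uses that the $\mu_N$ are finite (here probability) measures together with uniform integrability, which is implicit in your final paragraph and consistent with the paper's setting.
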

We shall use repeatedly the following general Fubini and disintegration theorems. For proofs in a context relevant to this article see \cite{A}, \S 2.6 and \cite{AFP}, \S 2.5, respectively. All spaces in these theorems are $\R^k$ for some $k$, all $\sigma$-algebras are Borel, and a family of measures $\nu_{\lambda}$ on $X$ is Borel measurable if for any $B$ Borel set in $X$ the assignment $\lambda \mapsto \nu_{\lambda}(B)$ is measurable as a map from the $\lambda$'s to $\R$. To avoid completions of $\sigma$-algebras we always extend functions defined off a set of measure $0$ by setting them equal to $0$ on that set.

\begin{Thm}[General Fubini] \label{genFub}
Let $\mu(dx)$ be a probability measure on $(X, \mathcal{E})$ and
$\mu_{x}(dy)$ a measurable family of probability measures on $(Y, \mathcal{F})$.
Then there is unique probability measure $M$ on $(X\times Y, \mathcal{E} \times \mathcal{F})$ such that:
\begin{equation}
   \begin{split}
       M(B)
       =
       \int_X
       \left(
       \int_Y
       \chi_B(x,y)\mu_x(dy)
       \right)
       dx
   \end{split}
\end{equation}
and $M$ has the following properties:
\begin{enumerate} 
\item
for $f$ measurable and positive on $X\times Y$ the assignment
$\displaystyle     \int_Y f(x,y) \mu_x(dy)
$ defines an $x$-measurable function and 
$\displaystyle 
\int_{X\times Y} f(x,y) M(dx,dy) 
=
\int_X \int_Y f(x,y) \mu_x(dy) \mu(dx) \in [0,+\infty].
$ 
\item
for $f$ on $X\times Y$ such that $\displaystyle \int_{X\times Y} f(x,y) M(dx,dy)$ is finite then  $\displaystyle     \int_Y f(x,y) \mu_x(dy)$ exists for almost all $x$ and, once extended by $0$ to the remaining $x$'s, it
 defines an $x$-measurable function which satisfies 
$\displaystyle 
\int_{X\times Y} f(x,y) M(dx,dy) 
=
\int_X \int_Y f(x,y) \mu_x(dy) \mu(dx).
$ 
\end{enumerate}

\end{Thm}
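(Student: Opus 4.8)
The plan is to follow the classical route: first build the set function $B \mapsto \int_X \mu_x(B_x)\,\mu(dx)$ --- writing $B_x = \{y : (x,y)\in B\}$ for the $x$-section of $B$ --- show it is a probability measure, and then bootstrap the two integration identities from indicators to simple functions to arbitrary measurable functions via monotone convergence.

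The first and decisive step is the measurability in $x$ of the map $x \mapsto \mu_x(B_x)$ for every $B \in \mathcal{E}\times\mathcal{F}$. For a rectangle $B = E \times F$ one has $\mu_x(B_x) = \chi_E(x)\,\mu_x(F)$, which is $x$-measurable precisely because $\{\mu_x\}$ is a measurable family, so that $x \mapsto \mu_x(F)$ is measurable. I would then let $\mathcal{D}$ be the collection of all $B \in \mathcal{E}\times\mathcal{F}$ for which $x\mapsto\mu_x(B_x)$ is measurable and check that $\mathcal{D}$ is a Dynkin ($\lambda$-)system: it contains $X\times Y$; it is closed under proper differences, since for $A\subseteq B$ in $\mathcal{D}$ one has $\mu_x((B\setminus A)_x)=\mu_x(B_x)-\mu_x(A_x)$, a difference of measurable functions legitimate because each $\mu_x$ is finite; and it is closed under increasing countable unions by continuity from below of each $\mu_x$. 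Since $\mathcal{D}$ contains the $\pi$-system of rectangles, the $\pi$--$\lambda$ theorem gives $\mathcal{D}=\mathcal{E}\times\mathcal{F}$. Here the hypothesis that all spaces are $\R^k$ with Borel $\sigma$-algebras is used to identify $\mathcal{E}\times\mathcal{F}$ with the Borel $\sigma$-algebra of $X\times Y$.

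With measurability in hand I define $M(B)=\int_X \mu_x(B_x)\,\mu(dx)$. It is immediate that $M(\emptyset)=0$, and since every $\mu_x$ and $\mu$ are probability measures, $M(X\times Y)=\int_X 1\,\mu(dx)=1$; countable additivity follows from the countable additivity of each $\mu_x$ on the disjoint sections together with monotone convergence for the $\mu$-integral. Uniqueness is then automatic: any probability measure agreeing with this formula on rectangles agrees with $M$ on a $\pi$-system generating $\mathcal{E}\times\mathcal{F}$, and two finite measures coinciding on a generating $\pi$-system coincide.

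For property (1), the identity $\int_{X\times Y} f\,dM = \int_X\int_Y f\,\mu_x(dy)\,\mu(dx)$ holds for $f=\chi_B$ by the very definition of $M$ and the first step, extends to nonnegative simple $f$ by linearity, and passes to general nonnegative measurable $f$ by choosing simple $s_n\uparrow f$ and applying the monotone convergence theorem to the inner integral --- which simultaneously shows $x\mapsto\int_Y f\,\mu_x(dy)$ is measurable as an increasing limit of measurable functions --- and then once more to the outer integral. For property (2) I would apply (1) to $f^+$ and $f^-$ separately; finiteness of $\int_{X\times Y} f\,dM$ forces $\int_Y f^\pm\,\mu_x(dy)$ to be finite for $\mu$-almost every $x$, so $\int_Y f\,\mu_x(dy)$ is defined for such $x$, and extending it by $0$ on the exceptional null set, as the paper's convention dictates to avoid completions, keeps it measurable; subtracting the two identities yields the claim. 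The main obstacle is entirely concentrated in the first step: establishing the $x$-measurability of the sectional mass $\mu_x(B_x)$ for arbitrary $B$ in the product $\sigma$-algebra, since once this Dynkin-system argument is secured the remaining steps are routine monotone-convergence bootstrapping.
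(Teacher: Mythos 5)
Your proof is correct. The paper does not prove this theorem itself---it defers to the cited reference (Ash, \emph{Real Analysis and Probability}, \S 2.6)---and your argument (sectional measurability of $x \mapsto \mu_x(B_x)$ via the $\pi$--$\lambda$ theorem on the rectangles, construction and uniqueness of $M$, monotone-convergence bootstrapping for property (1), and the decomposition $f = f^+ - f^-$ with the almost-everywhere finiteness argument for property (2)) is precisely the standard proof given in such references, so there is nothing to correct or to contrast.
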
 
We shall use the notation
\begin{equation}
\begin{split}
      M(dx,dy)
     =
     \int 
     M_x(dy)\mu(dx)  
\end{split}
\end{equation} 
as a shorthand for the measure $M$ of Theorem \ref{genFub}.

\begin{Thm}[Disintegration] \label{dsntgrt}
For $M$ probability measure on $(X\times Y, \mathcal{E} \times \mathcal{F})$ 
and $\mu(dx) = (pr_1)_\#M (dx)$ on $(X,\mathcal E)$ there exists a $\mu$-almost-all uniquely determined Borel measurable family of probability measures $\{\mu_x(dy)\}$ on $(Y,\mathcal F)$ such that $\displaystyle M(dx,dy)=\int \mu_x(dy)\mu(dx)$.
\end{Thm}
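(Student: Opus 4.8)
The plan is to build the family $\{\mu_x\}$ from Radon--Nikodym densities and then, off a single $\mu$-null set, to reassemble these densities into honest Borel probability measures, using crucially that $Y=\R^k$ is a Polish space carrying cumulative distribution functions. Throughout I would reduce everything to a countable generating class, so that the awkward ``the exceptional set depends on which $B$'' issue is defeated by taking a countable union of null sets.

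First, for each Borel set $B$ in $Y$ the set function $A\mapsto M(A\times B)$ is a finite measure on $(X,\mathcal{E})$ absolutely continuous with respect to $\mu$, since $\mu(A)=0$ forces $M(A\times B)\le M(A\times Y)=\mu(A)=0$. Radon--Nikodym yields a density, determined $\mu$-almost everywhere and valued in $[0,1]$ after truncation. Rather than do this for all $B$ at once, I would do it only for the orthants: for each rational point $t\in\mathbb{Q}^k$ let $F_t(x)$ be the density of $A\mapsto M\!\left(A\times\prod_{j}(-\infty,t_j]\right)$, so that $M\!\left(A\times\prod_j(-\infty,t_j]\right)=\int_A F_t\,d\mu$ for every $A\in\mathcal{E}$.

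The main step, and the central obstacle, is to turn the $F_t$ into genuine measures off one exceptional set. The countably many almost-everywhere relations asserting that $t\mapsto F_t(x)$ has the monotonicity (nonnegative box-increment) and limit properties of a distribution function, tending to $0$ as some coordinate of $t$ goes to $-\infty$ and to $1$ as all coordinates go to $+\infty$, each fail only on a $\mu$-null set; let $X_0$ be the union of these countably many null sets, so $\mu(X_0)=0$. For $x\notin X_0$ the function $t\mapsto F_t(x)$, extended off $\mathbb{Q}^k$ by right continuity, is a bona fide $k$-dimensional distribution function and hence determines, by the standard correspondence, a unique Borel probability measure $\mu_x$ on $Y$; for $x\in X_0$ I set $\mu_x$ to any fixed probability measure. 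The point is that this correspondence builds in countable additivity, so no separate tightness argument is needed — this is exactly where the Euclidean structure of $Y$ is used.

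It remains to check the three conclusions. For measurability, the collection of Borel $B$ for which $x\mapsto\mu_x(B)$ is $\mathcal{E}$-measurable is a $\lambda$-system that contains the orthants (each $F_t$ is a Radon--Nikodym density, hence measurable); as the orthants form a $\pi$-system generating $\mathcal{F}$, this collection equals $\mathcal{F}$ by Dynkin's theorem. For the identity $M(dx,dy)=\int\mu_x(dy)\,\mu(dx)$, both sides agree on rectangles $A\times\prod_j(-\infty,t_j]$ by construction, and these generate $\mathcal{E}\times\mathcal{F}$, so the measure assigned to the constructed family by Theorem \ref{genFub} coincides with $M$. Finally, uniqueness holds $\mu$-almost everywhere because any competing family must share the densities $F_t$ for $\mu$-a.e. $x$, by uniqueness of Radon--Nikodym derivatives; unioning these countably many null sets yields a $\mu$-null set off which the two families agree on all orthants and hence coincide.
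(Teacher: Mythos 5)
Your proof is correct, and a comparison with ``the paper's proof'' is moot in the strict sense: the paper does not prove Theorem \ref{dsntgrt} at all, but quotes it as a standard result and defers to the references (\cite{AFP}, \S 2.5, and cf.\ \cite{A}). So what you have done is supply a self-contained proof of a statement the paper only cites. Your route --- Radon--Nikodym densities $F_t$ of $A\mapsto M\bigl(A\times\prod_j(-\infty,t_j]\bigr)$ for rational $t$, one countable union of null sets off which $t\mapsto F_t(x)$ is a genuine $k$-dimensional distribution function, the Lebesgue--Stieltjes correspondence to produce $\mu_x$, and Dynkin's $\pi$-$\lambda$ theorem for measurability, for the identity $M(dx,dy)=\int\mu_x(dy)\,\mu(dx)$, and for $\mu$-a.e.\ uniqueness --- is the classical probabilistic construction of regular conditional distributions, and it is complete: every exceptional set is controlled by countability, and the Euclidean structure of $Y=\R^k$ enters exactly where you say it does, through the CDF--measure correspondence, which is what lets you avoid any tightness or compact-class argument. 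The proof in the reference the paper actually cites is different in flavor: there, Radon--Nikodym derivatives are taken for a countable dense family of compactly supported continuous functions on $Y$, and the measures $\mu_x$ are reconstituted from the resulting a.e.-defined positive linear functionals via the Riesz representation theorem. Your argument is the more elementary (only Radon--Nikodym, distribution functions, and Dynkin), while the functional-analytic one generalizes more readily beyond $\R^k$. One small point you should make explicit: when you extend $t\mapsto F_t(x)$ from $\mathbb{Q}^k$ by right continuity, the value $\mu_x\bigl(\prod_j(-\infty,t_j]\bigr)$ at a rational $t$ is a priori the right-hand limit $\inf_{s>t,\,s\in\mathbb{Q}^k}F_s(x)$ rather than $F_t(x)$ itself; either add right continuity at each rational point to your countable list of a.e.\ relations (it holds a.e.\ by continuity from above of $M$ and monotone convergence), or note that by dominated convergence this right-hand limit is itself a version of the same Radon--Nikodym density, so the measurability, product-identity, and uniqueness steps go through verbatim. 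With that remark inserted, the proof is airtight.
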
 

{\subsection{Molecule Measures}
For $x_i^{(N)}(t), u_i^{(N)}(t)$ solutions of the $N$-system \eqref{Hamp} on some fixed time interval $[0,T]$ with $T<\infty$,   
define for each $t$ 
the time dependent vector field
\begin{equation}
      u_N(t,x)=
      \left\{
     \begin{array}{ll}
    \dfrac{d x_i^{(N)}(t)}{dt}  & \text{if} \ x = x_i^{(N)}(t)\\
    \ \ \ \  0 & \text{o/w},
     \end{array}
     \right. 
\end{equation}
and 
the (molecule-velocity density) probability measures on $\R^6$ 
\begin{equation} \label{M_t^{N}}
\begin{split}
     M_t^{(N)}(dx,dv):=
     \frac1N
     \sum_{i=1}^{N} \delta_{\left(x_i^{(N)}(t), u_i^{(N)}(t)\right)}(d x,d v).
\end{split}
\end{equation}
The first marginal of $M_t^{(N)}$ is
\begin{equation}
\begin{split}
    \mu_t^{(N)}(d x)
    :=
    \frac{1}{N}
    \sum_{i =1}^{N} \delta_{ x_i^{(N)}(t)}(d x),
\end{split}
\end{equation}
or\footnote{For $h$ measurable  and $\nu$ measure, $h_\#\nu(B) = \nu(h^{-1}(B))$} 
    $M_t^{(N)}(dx, dv)
    =
    \left(Id \times u_N\right)_\#\mu_t^{(N)}(dx,dv)$.
Notice that for all $t$ the field $u_N(t,.)$ is defined for $\mu_t(dx)$-almost all $x$. 
The total mass being $1$ for all $N$, the factor $1/N$ is 
the mass of each molecule in the $N$-system.

The family $\left\{M_t^{(N)}(dx,dv):t\in[0,T]\right\}$ is Borel measurable: for any Borel $B \subset \mathbb R^6$, $\displaystyle t\rightarrow M_t^{(N)}(B)=\frac 1N\sum_{i=1}^{N}\chi_B\left(x_i^{(N)}(t),v_i^{(N)}(t)\right)$ is Borel since $\left(x_i^{(N)}(t),v_i^{(N)}(t)\right)$ is continuous in $t$ and $\chi_B$ is Borel. Then according to Theorem \ref{genFub} the measure 
\begin{equation} \label{definesM^{(N)}}
\begin{split}
      M^{(N)}(dt,dx,dv)
     =
     \int 
     M_t^{(N)}(dx,dv)dt  
\end{split}
\end{equation}
is well defined.
Conversely, given a probability measure $M(dt,dx,dv)$ and for $pr^{1,2}:(t,x,v) \mapsto (t,v)$, define
\begin{equation} \label{mu(dt,dx)}
\begin{split}
    \mu(dt,dx)
    :=
    (pr^{1,2}_{\#}M)(dt,dx,dv)
\end{split}
\end{equation} 
and disintegrate $M(dt,dx,dv)$ according to Theorem \ref{dsntgrt}  with respect to $\mu(dt,dx)$
\begin{equation}
    \begin{split}
        M(dt,dx,dv)
        =
        \int M_{t,x}(dv) \mu(dt,dx),
    \end{split}
\end{equation}
to get a $(t,x)$-Borel measurable family of measures $M_{t,x}(dv)$.
When $M(dt,dx,dv)$ has finite first moment the barycentric projection $u(t,x)$ given by 
\begin{equation} \label{defineu}
\begin{split}
     u(t,x) = \int v \,    M_{t,x}(dv)
\end{split}
\end{equation}
is, according to item (3) of Theorem \ref{genFub}, well-defined for $\mu$-almost all $(t,x)$ and, once extended by $0$ to the remaining $(t,x)$'s, it
 defines an $(t,x)$-measurable function which satisfies 
$\displaystyle 
\int_{[0,T]\times\R^6} v M(dt,dx,dv) 
=
\int_{[0,T]\times \R^3} u(t,x) \mu(dt,dx)$.

\begin{Lem}\label {limit measurable}
For each $N$ let $\left\{\displaystyle \nu_x^{(N)}(dy)\right\}$ be a  Borel measurable family of probability measures on $Y$ and let $\displaystyle \nu_x^{(N)}(dy)$ converge weakly to $\displaystyle \nu_x(dy)$ for all $x\in X$. Then for $\mu$ a probability measure on $X$ 
\begin{enumerate}
\item $\left\{\displaystyle \nu_x(dy)\right\}$ is a Borel measurable family, and
\item $\displaystyle \int \nu_x^{(N)}(dy)\mu(dx)\Rightarrow \int\nu_x(dy)\mu(dx)$.
\end{enumerate}
\end{Lem}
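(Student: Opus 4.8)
The plan is to establish (1) first, since it is precisely measurability of the limit family that lets Theorem~\ref{genFub} define the mixture $\int\nu_x(dy)\mu(dx)$ named in (2). For (1), I would start from the fact that each $\{\nu_x^{(N)}(dy)\}$ is a Borel measurable family, so that for every bounded continuous $f$ on $Y$ the map $x\mapsto\int_Y f(y)\,\nu_x^{(N)}(dy)$ is measurable (approximate $\chi_B$ by simple functions and pass to the limit to obtain this for bounded Borel $f$). The hypothesis $\nu_x^{(N)}\Rightarrow\nu_x$ for every $x$ then gives $\int_Y f\,\nu_x^{(N)}(dy)\to\int_Y f\,\nu_x(dy)$ pointwise in $x$, so $x\mapsto\int_Y f\,\nu_x(dy)$ is a pointwise limit of measurable functions, hence measurable, for every bounded continuous $f$.

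To pass from bounded continuous test functions to arbitrary Borel sets I would run a Dynkin ($\pi$--$\lambda$) argument. For an open $U\subset Y$, since $Y=\R^k$ is metric, choose bounded continuous $f_n\uparrow\chi_U$, e.g. $f_n=\min\left(1,\,n\,\mathrm{dist}(\cdot,U^c)\right)$; monotone convergence gives $\nu_x(U)=\lim_n\int f_n\,\nu_x(dy)$, again a pointwise limit of measurable functions, so $x\mapsto\nu_x(U)$ is measurable for every open $U$. The collection of Borel $B$ for which $x\mapsto\nu_x(B)$ is measurable contains $Y$, is closed under complements via $\nu_x(B^c)=1-\nu_x(B)$ and under countable disjoint unions via countable additivity, hence is a $\lambda$-system; it contains the open sets, which form a $\pi$-system generating the Borel $\sigma$-algebra. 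The $\pi$--$\lambda$ theorem then yields measurability of $x\mapsto\nu_x(B)$ for all Borel $B$, which is (1).

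For (2), write $P_N=\int\nu_x^{(N)}(dy)\mu(dx)$ and $P=\int\nu_x(dy)\mu(dx)$ for the mixtures of Theorem~\ref{genFub}, the latter being well defined by (1). For $g$ bounded and continuous on $X\times Y$, Theorem~\ref{genFub} gives
\[
     \int_{X\times Y} g\,dP_N
     =
     \int_X\left(\int_Y g(x,y)\,\nu_x^{(N)}(dy)\right)\mu(dx).
\]
For each fixed $x$ the section $g(x,\cdot)$ is bounded and continuous on $Y$, so weak convergence makes the inner integral converge pointwise in $x$ to $\int_Y g(x,y)\,\nu_x(dy)$, while all these inner integrals are bounded by $\sup_{X\times Y}|g|$. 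Since $\mu$ is a probability measure, dominated convergence passes the limit through the outer integral, and a second application of Theorem~\ref{genFub} identifies the result as $\int_{X\times Y} g\,dP$. As $g$ was an arbitrary bounded continuous function, this is the asserted weak convergence $P_N\Rightarrow P$ of (2).

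I expect the only genuine obstacle to be the measurability upgrade inside (1): producing the monotone approximation $f_n\uparrow\chi_U$ and verifying the $\lambda$-system closure properties. Everything in (2) is then a routine dominated-convergence argument, resting on the pointwise convergence of the inner integrals supplied by the pointwise weak convergence hypothesis, and on (1) to guarantee that the limiting mixture $P$ exists in the first place.
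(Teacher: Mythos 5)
Your proof is correct and takes essentially the same route as the paper's: both parts reduce measurability of $x\mapsto\nu_x(B)$ to continuous test functions via metric approximations of indicator functions followed by a generating-class (Dynkin/monotone class) argument, and both prove (2) by passing the limit through the outer $\mu$-integral by dominated convergence. The only cosmetic difference is that you work with open sets and increasing approximations $f_n\uparrow\chi_U$ together with the $\pi$--$\lambda$ theorem, whereas the paper uses closed sets with decreasing approximations $f_n(y)=(1+n\,d(y,B))^{-1}\downarrow\chi_B$ and closure of the good class under increasing limits and differences.
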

\begin{proof}
Let $\mathcal M=\left\{B\in \mathcal B(Y):x\to \nu_x(B) \text{ Borel measurable}\right\}$. For the first assertion, it is enough to show $ \mathcal M=\mathcal B(Y)$. By definition $\mathcal M\subset \mathcal B(Y)$, so it is enough to show  $\mathcal B(Y)\subset \mathcal M$.
Or, for  $\mathcal C=\left\{B \subset Y,\text{ closed}\right\}$ it is enough to show $\sigma(\mathcal C) \subset\mathcal M$.

First notice that $\mathcal M$ is closed under increasing limit and with respect to difference of sets: for  any increasing sequence in $\mathcal M$, $B_1\subset B_2\subset \ldots$ such that $B_n\to B$, $\displaystyle\nu_x(B_n)\to\nu_x(B)$. Therefore $\nu_x(B)$ is Borel, i.e. $B\in \mathcal M$. Also if both $A$ and $C$ are in $\mathcal M$ and $A\subset C$, we have $\nu_x(C\backslash A)=\nu_x(C)-\nu_x(A)$, therefore $\nu_x(C\backslash A)$ is Borel and $C\backslash A\in \mathcal M$. 

Next we show $\mathcal C\subset \mathcal M$. For any $B \in \mathcal C$, approximate  $f_n(y)\rightarrow \chi_B(y)$, $n  \to \infty$  by $f_n$ positive, continuous and bounded (e.g. $f_n(y) =(1+nd\left(y,B\right))^{-1}$).
Then  
     \begin{equation}
         \begin{split}
             \nu_x(B)
             =
             \int \chi_B \nu_x(dy)
             =
             \lim_{n\rightarrow \infty} 
             \int f_n \nu_x(dy)
             =
             \lim_{n\rightarrow \infty}
             \lim _{N\rightarrow \infty}
             \int
              f_n \nu_x^{(N)}(dy).
         \end{split}
     \end{equation}
By assertion (2) in Theorem \ref{genFub} (General Fubini), $\displaystyle  \int f_n \nu_x^{(N)}(dy)$ is Borel, therefore so is $\nu_x(B)$.

It is clear that $\mathcal C$ is closed under finite intersections and $Y\in \mathcal C$. Then by the Monotone Class Theorem, \cite{JP}, p.36, $\sigma(\mathcal C) \subset  \mathcal M $.

For the second assertion note that for any $f(x,y)$ bounded continuous,
\begin{equation}
   \begin{split}
       \lim_{N\to \infty}
       \iint f(x,y)\nu_x^{(N)}(dy)\mu(dx)
              &=
       \int \lim_{N\to \infty} 
    \left( \int f(x,y)\nu_x^{(N)}(dy)\right)\mu(dx)\\
    &=
    \int
    \left( \int f(x,y)\nu_x(dy)\right)\mu(dx)\\
    &=
    \iint f(x,y)\nu_x(dy)\mu(dx).
   \end{split}
\end{equation}
\end{proof}
\begin{Rmk}
Suppose $\displaystyle \nu_x^{(N)}(dy)$ converges weakly to $\displaystyle \nu_x(dy)$ for $\mu$-a.e $x\in X$. Then Lemma \ref{limit measurable} holds for $\nu_x(dy)$ extending trivially.
\end{Rmk}

\subsection{Interaction Terms} Define now for $\Phi$ from \eqref{Hamp} and for any $\varphi \in C_0^{\infty}((0,T)\times\R^3)$
\begin{equation}
\begin{split}
    I_{\Phi}^{(N)} (t,\varphi) :=
    -\frac1{N\sigma_N}
    \sum_{i=1}^{N} 
    \varphi(t, x_i(t))
    \sum_{\substack{j=1\\ j \neq i}}^{N} 
    \Phi'\left(\frac{| x_i (t)-  x_j(t)|}{\sigma_N}\right)
    \frac{ x_i(t) -  x_j(t)}{| x_i(t) -  x_j(t)|},
\end{split}
\end{equation}

\begin{equation}
\begin{split}
     {\frak I}_{\Phi}^{(N)} (\varphi)
     =
    \int_0^T I_{\Phi}^{(N)}(t,\phi)\,dt
    ,
\end{split}
\end{equation}
and 
\begin{equation}
\begin{split}
     {\frak I}_{\Phi}(\varphi): = 
     \lim_{N\to \infty}
     {\frak I}_{\Phi}^{(N)} (\varphi), 
\end{split}
\end{equation}
when the limit exists. We will not insist on the form of the $\frak I_{\Phi}$ term here. \cite{D} shows how $\frak I_{\Phi}$ can be weakly of the form $\div S$ for interaction potentials without forces close to the center of the interaction, cf. \cite{G}, p. 110. $\frak I_{\Phi}$  vanishes for the examples that follow here in section \ref{PicardExample}. Other forms of $\frak I$ and its role will appear elsewhere.  

\subsection{The Main Theorems} We are now ready to prove the main theorems on the hydrodynamic equations at the limit of infinitely many molecules.

\begin{Thm} \label{MainMain}
Assume that
\begin{enumerate}
\item  
        $M^{(N)}(dt,dx,dv)$ as in \eqref{definesM^{(N)}} converge weakly to $M(dt,dx,dv)$
and
\item
for some $\varepsilon>0$
\begin{equation}\label{Ass2new}
   \begin{split}
        \sup_N 
        \int \left| v\right|^{2+\varepsilon}
         M^{(N)}(dt,dx,dv)
         <
         \infty.
   \end{split}
\end{equation}

\end{enumerate}
Then the barycentric projection $u$ of $M(dt,dx,dv)$  
as in \eqref{defineu} satisfies $u \in  L^{2+\varepsilon}([0,T]\times \R^3, \mu)$  for $\mu(dt,dx)$ as in \eqref{mu(dt,dx)}, and for $\mu_t$ such that $\displaystyle \mu = \int \mu_t(dx) dt$ and any $\varphi \in C_0^{\infty}((0,T)\times \R^3)$  the following continuity  and momentum equations hold:  
\begin{equation}
\begin{split}\label{ContinuityEqn}
     &\int_0^T \int_{\R^3}
     \left\{
     \frac{\partial \varphi}{\partial t} (t,x)
      +
      \nabla_x \varphi(t,x) \cdot  u(t,x)
      \right\}
      \mu_t(dx)\ dt
     =
     0,
\end{split}
\end{equation}
\begin{equation}\label{MomentumEqn}
\begin{split}
    & \int_0^T\int_{\R^3}
   \frac{\partial \varphi}{\partial t}
    (t, x)  u (t, x)  \mu_t(d x) dt
   +
   \int_0^T\int_{\R^3}
   \nabla \varphi(t, x)\cdot  u(t, x)   u(t, x) \mu_t (d x)  dt 
    \\
    &
    =
    -
    \int_0^T\int_{\R^3}
    \nabla \varphi(t,x) 
    \cdot\int ( v - u(t, x) )
    ( v -  u(t, x))\, M_{t, x}(dv)\ \mu_t (d x) dt
    \ +\
    {\frak I}_{\Phi}(\varphi).
\end{split}
\end{equation}
\end{Thm}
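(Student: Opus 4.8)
The plan is to obtain both equations as limits of exact pointwise identities for the $N$-molecule trajectories, derived by integrating the total time derivatives of $\varphi(t,x_i)$ and of $\varphi(t,x_i)u_i$, and then to pass to $N\to\infty$ using the uniform integrability furnished by the $(2+\varepsilon)$-moment bound together with Lemma \ref{continuousintegrand}. First I would dispose of the integrability claim for $u$: since each $M_{t,x}(dv)$ is a probability measure and $u(t,x)=\int v\,M_{t,x}(dv)$, Jensen's inequality gives $|u(t,x)|^{2+\varepsilon}\le\int|v|^{2+\varepsilon}M_{t,x}(dv)$; integrating against $\mu$ and applying item (2) of Theorem \ref{genFub} to the positive integrand bounds $\int|u|^{2+\varepsilon}\mu$ by $\int|v|^{2+\varepsilon}M(dt,dx,dv)$. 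This last integral is finite because $|v|^{2+\varepsilon}$ is nonnegative and continuous, so weak convergence and \eqref{Ass2new} yield, by lower semicontinuity (truncate to $|v|^{2+\varepsilon}\wedge k$, pass to the limit for the bounded continuous truncation, then let $k\to\infty$), $\int|v|^{2+\varepsilon}M\le\liminf_N\int|v|^{2+\varepsilon}M^{(N)}<\infty$.

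For the continuity equation I would start from $dx_i/dt=u_i$ in \eqref{Hamp}. For fixed $\varphi\in C_0^\infty((0,T)\times\R^3)$ the chain rule gives $\frac{d}{dt}\varphi(t,x_i(t))=\partial_t\varphi+\nabla_x\varphi\cdot u_i$ along the $i$-th trajectory, and since $\varphi$ vanishes near $t=0,T$ the time integral of the left side is zero. Averaging over $i$ and dividing by $N$ rewrites this as $\int[\partial_t\varphi(t,x)+\nabla_x\varphi(t,x)\cdot v]\,M^{(N)}(dt,dx,dv)=0$. The integrand is continuous and dominated by $|\partial_t\varphi|+\|\nabla\varphi\|_\infty|v|\chi_{\mathrm{supp}\,\varphi}$; Lemma \ref{pmom} with $p=1$ (using $1+(1+\varepsilon)=2+\varepsilon$) makes $|v|$ uniformly integrable, so the integrand is uniformly integrable and Lemma \ref{continuousintegrand} lets me pass to the limit, giving $\int[\partial_t\varphi+\nabla_x\varphi\cdot v]\,M=0$. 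Disintegrating $M=\int M_{t,x}(dv)\,\mu(dt,dx)$ and recognizing $\int v\,M_{t,x}(dv)=u(t,x)$ via \eqref{defineu} and item (3) of Theorem \ref{genFub} produces \eqref{ContinuityEqn} after writing $\mu=\int\mu_t(dx)\,dt$.

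The momentum equation follows the same scheme applied to $\varphi(t,x_i)u_i$. Differentiating and using both lines of \eqref{Hamp}, the total derivative again integrates to zero; the force contribution $\frac1N\sum_i\varphi(t,x_i)\,du_i/dt$ is by construction exactly $I_\Phi^{(N)}(t,\varphi)$, whose time integral is ${\frak I}_\Phi^{(N)}(\varphi)$, and the remaining term is $\int[(\partial_t\varphi)v+(\nabla_x\varphi\cdot v)v]\,M^{(N)}$. Here the integrand grows like $|v|^2$, so I would invoke Lemma \ref{pmom} with $p=2$ (using $2+\varepsilon$ directly) to get uniform integrability of $|v|^2$, then Lemma \ref{continuousintegrand} to pass to the limit; since the sum vanishes for every $N$, this simultaneously forces ${\frak I}_\Phi^{(N)}(\varphi)$ to converge, so that ${\frak I}_\Phi(\varphi)$ exists. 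Disintegrating and splitting $v=(v-u(t,x))+u(t,x)$ in the quadratic factor, the cross terms vanish because $\int(v-u)\,M_{t,x}(dv)=0$, leaving the convective part $\nabla_x\varphi\cdot u\,u$ and the fluctuation tensor $\int(v-u)(v-u)\,M_{t,x}(dv)$; rearranging yields \eqref{MomentumEqn}.

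The main obstacle, and the only place where the strength of the hypotheses is genuinely used, is passing the nonlinear term $(\nabla_x\varphi\cdot v)v$ to the limit: it is continuous but unbounded in $v$, so plain weak convergence is insufficient and one needs $|v|^2$ to be uniformly integrable against the $M^{(N)}$. This is precisely what the extra $\varepsilon$ in \eqref{Ass2new} buys through Lemma \ref{pmom}; everything else is bookkeeping with the chain rule, the Fubini and disintegration identities of Theorems \ref{genFub} and \ref{dsntgrt}, and the elementary variance decomposition of the second moment.
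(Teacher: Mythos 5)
Your proof is correct and follows essentially the same route as the paper's: exact $N$-particle continuity and momentum identities derived from \eqref{Hamp}, limit passage via the uniform integrability given by Lemma \ref{pmom} and Lemma \ref{continuousintegrand}, the Jensen-plus-lower-semicontinuity bound giving $u \in L^{2+\varepsilon}([0,T]\times\R^3,\mu)$, the decomposition $v=(v-u)+u$ with vanishing cross terms, and the convergence of ${\frak I}_{\Phi}^{(N)}(\varphi)$ forced by equation \eqref{Nmomentum} and the other two limits. The only difference is cosmetic ordering: you establish the integrability of $u$ up front, whereas the paper establishes it mid-proof exactly where it is needed to split the limit of the $vv$ term into the convective and fluctuation integrals.
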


\begin{proof}
For any $\varphi(t,x) \in C_0^{\infty}((0,T)\times \R^3)$, by the first of \eqref{Hamp}
\begin{equation}
    \begin{split}
         \dfrac{d}{dt}\sum_i \varphi\left(t,x_i^{(N)}\right)
         &=
         \sum_i \partial_t\varphi\left(t,x_i^{(N)}\right)+\nabla_x\varphi\left(t,x_i^{(N)}\right)u_i^{(N)}
         =
         \int_{\R^6}
          \left(
          \partial_t
          \varphi(t,x)+\nabla_x\varphi(t,x)v
           \right)
         M_t^{(N)}(dx,dv)
        \end{split}
\end{equation}
and by the second of \eqref{Hamp}
\begin{equation}
    \begin{split}
         \dfrac{d}{dt}\sum_i \varphi\left(t,x_i^{(N)}\right)u_i^{(N)}
         &=
         \sum_i \partial_t\varphi\left(t,x_i^{(N)}\right)u_i^{(N)}+\nabla_x\varphi\left(t,x_i^{(N)}\right)\cdot u_i^{(N)}u_i^{(N)}
         +
         I_{\Phi}^{(N)}(\varphi)\\
         &=
         \int_{\R^6}
         \left(
          \partial_t\varphi(t,x)v
          +
          \nabla_x\varphi(t,x)\cdot v \,v
          \right) M_t^{(N)}(dx,dv)
         +
         I_{\Phi}^{(N)}(\varphi).
        \end{split}
\end{equation}
Integrating over $[0,T]$, and since $\varphi(t,x)$ has compact support, the left hand sides of these equations integrate to $0$. This gives the continuity and momentum equations for each  $N$-system:
\begin{equation} \label{Ncontinuity}
   \begin{split}
       \int\limits_{[0,T]\times\R^6}
        \left(
         \partial_t\varphi(t,x)+\nabla_x\varphi(t,x)v
         \right)
         M^{(N)}(dt,dx,dv)
        =
        0,
   \end{split}
\end{equation}
\begin{equation}  \label{Nmomentum}
    \begin{split}
          \int\limits_{[0,T]\times\R^6}
         \left(
         \partial_t\varphi(t,x)v
         +
         \nabla_x\varphi(t,x)\cdot vv
         \right)
         M^{(N)}(dt,dx,dv)
\         +
         {\frak I}_{\Phi}^{(N)}(\varphi)
        =
        0.
    \end{split}
\end{equation}
The first marginal $M^{(N)}(dt,dx,dv)$ is $dt$ for all $N$. Therefore the first marginal of $M(dt,dx,dv)$, and hence of $\mu(dt,dx)$,  is also $dt$. 
 By the weak convergence of $M^{(N)}(dt,dx,dv)$  and the definition of $\mu_t$
\begin{equation}
\begin{split}
        \int\limits_{[0,T]\times\R^6}
         \partial_t\varphi(t,x)
         M^{(N)}(dt,dx,dv)
        & \rightarrow
        \int\limits_{[0,T]\times\mathbb R^6}
        \partial_t\varphi(t,x)
       M(dt,dx,dv)\\
       &=
       \int\limits_{[0,T]\times\mathbb R^3}
        \partial_t\varphi(t,x)
        \int_{\mathbb R^3}
       M_{t,x}(dv)\mu(dt,dx)\\
       &=
       \int\limits_{[0,T]\times\mathbb R^3}
        \partial_t\varphi(t,x)
        \mu(dt,dx)\\
        &=
        \int_0^T
        \int_{\mathbb R^3}
        \partial_t\varphi(t,x)
        \mu_t(dx)dt.
\end{split}
\end{equation}
In addition, from \eqref{Ass2new} 
\begin{equation}
\begin{split}
        \sup_N
        \int\limits_{[0,T]\times\R^6} 
        |\nabla_x\varphi(t,x)\, v|^{2+ \varepsilon}
         M^{(N)}(dt,dx,dv)
       &\leq
       C_{\varphi}
       \sup_N
        \int\limits_{[0,T]\times\R^6} 
        |v|^{2+ \varepsilon}
         M^{(N)}(dt,dx,dv)
        < \infty, 
\end{split}
\end{equation}
and similarly for integrals involving $\partial_t \varphi$ instead of $\nabla_x \varphi$. Then, using Lemma
 \ref{continuousintegrand},
definition \eqref{defineu}, and the definition of $\mu_t$,
\begin{equation}
   \begin{split}
        \int\limits_{[0,T]\times\R^6}
        \nabla_x\varphi(t,x)v \,
         M^{(N)}(dt,dx,dv)
        &\rightarrow
        \int\limits_{[0,T]\times\mathbb R^6}
        \nabla_x\varphi(t,x)v M(dt,dx,dv)\\
        &=
         \int_0^T
        \int_{\mathbb R^3}
        \nabla_x\varphi(t,x)
        u(t,x)
        \mu_t(dx)dt,
        \end{split}
\end{equation}
\begin{equation} \label{firsttermconverge}
   \begin{split}
      \int\limits_{[0,T]\times\R^6}
       \partial_t\varphi(t,x)
        v 
        M^{(N)}(dt,dx,dv)
       & \rightarrow
        \int\limits_{[0,T]\times\mathbb R^6}
        \partial_t\varphi(t,x)v M(dt,dx,dv)\\
        &=
         \int_0^T
        \int_{\mathbb R^3}
        \partial_t\varphi(t,x)u(t,x)
        \mu_t(dx)dt,
   \end{split}
\end{equation}
\begin{equation} \label{secondtermconverge}
\begin{split}
      \int\limits_{[0,T]\times\R^6}
         \nabla_x\varphi(t,x)\cdot v v 
         M^{(N)}(dt,dx,dv)
        \rightarrow&
         \int\limits_{[0,T]\times\mathbb R^6}
        \nabla_x\varphi(t,x)\cdot v v M(dt,dx,dv).
\end{split}
\end{equation}
Adding and subtracting $u(t,x)$, the last limit can be rewritten as
\begin{equation}
    \begin{split}
         &\int\limits_{[0,T]\times\mathbb R^6}
        \nabla_x\varphi(t,x)\cdot v v M(dt,dx,dv)\\
        =&
        \int\limits_{[0,T]\times\mathbb R^3}
        \nabla_x\varphi(t,x)\cdot 
        \int_{\mathbb R^3}
        \{(v-u(t,x))(v-u(t,x)) + u(t,x)u(t,x)
        \\
        & \ \ \ \ \ \ \ \ \ \ \ \ \ \ \ \ \ \ \ \ \ \ 
        \ \ \ \ \ \ \ \ \ \ \ +
        (v-u(t,x)) u(t,x) + u(t,x)(v-u(t,x))\}
        M_{t,x}(dv)\mu(dt,dx),
\end{split}
\end{equation}
where the last two  terms in the $M_{t,x}$ integrand integrate to zero.
For the remaining terms, notice that $M(dt,dx,dv)$ has finite $v$-moment by \eqref{Ass2new}, and therefore, according to the remarks following \eqref{defineu}, $u(t,x)$ is measurable. In addition, 
\begin{equation} \label{2+}
\begin{split}
      \int\limits_{[0,T]\times\mathbb R^3}
        \left| 
        u(t,x)\right|^{2+\varepsilon}
        \mu(dt,dx)  
        &=
      \int\limits_{[0,T]\times\mathbb R^3}
        \left| 
        \int vM_{t,x}(dv)\right|^{2+\varepsilon}
        \mu(dt,dx)\\
        &\leq
      \int\limits_{[0,T]\times\mathbb R^3}
        \int |v|^{2+ \varepsilon}M_{t,x}(dv)
        \mu(dt,dx) \\
        &=
        \int\limits_{[0,T]\times\mathbb R^6}
         |v|^{2+\varepsilon}M(dt,dx,dv)\\
        &\leq
        \liminf_N \int\limits_{[0,T]\times\mathbb R^6}
         |v|^{2+\varepsilon}M^{(N)}(dt,dx,dv) < \infty, 
     \end{split}
\end{equation}
where in the last step we used the lower semicontinuity of weak convergence (valid for all lower semicontinuous and bounded below functions). Therefore, $u \in  L^{2+\varepsilon}([0,T]\times \R^3, \mu)$ and 
$\nabla_x\varphi(t,x)\cdot (v-u(t,x))(v-u(t,x))$ and 
$\nabla_x\varphi(t,x)\cdot u(t,x)u(t,x)$ are separately integrable
with respect to $M$. We can then write
\begin{equation}\label{vv break}
\begin{split}
      \int\limits_{[0,T]\times\mathbb R^6}
        \nabla_x\varphi(t,x)\cdot v v
         &M(dt,dx,dv)
        \\
        =
        &
        \int_0^T\int_{\R^3}
        \nabla \varphi(t, x)\cdot  u(t, x)   u(t, x) \mu_t (d x)  dt  \\ 
    &\ \ \ \ \ \  \ 
    +\int_0^T\int_{\R^3}
    \nabla \varphi(t,x) 
    \cdot\int ( v - u(t, x) )
    ( v -  u(t, x))\, M_{t, x}(dv)\ \mu_t (d x) dt.
    \end{split}
\end{equation}
The existence of the two limits \eqref{firsttermconverge}, \eqref{secondtermconverge} and equation \eqref{Nmomentum} imply that ${\frak I}_{\Phi}^{(N)}(\varphi)$ also converges.
\end{proof}

\begin{Rmk}
It is standard that for $\displaystyle u(t,x)\in L^1\left([0,T]\times \mathbb R^3, \mu \right)$ satisfying the continuity equation \eqref{ContinuityEqn} there is weakly continuous in $t$ Borel family $\{\widetilde{\mu}_t(dx)\}$ such that $\widetilde{\mu}_t(dx)=\mu_t(dx)$ for almost all $t$, see \cite{AGS}, Lemma 8.1.4.  
\end{Rmk}

\begin{Exam}\label{tightness of Mdt}
The sequence of measures
     $M^{(N)}(dt,dx,dv)$
on $[0,T]\times \R^6$ has weakly convergent subsequence whenever there is finite constant $B$ such that for almost all $t$
\begin{equation} \label{MorreyAss}
\begin{split}
    \frac1N\sum_{i=1}^{N} \left|x_i^{(N)}(t)\right|^2 < B, 
    \quad
    \frac1N\sum_{i=1}^{N} \left|u_i^{(N)}(t)\right|^2 < B:
\end{split}
\end{equation}
 by Chebyshev's inequality, 
\begin{equation} 
    \begin{split}
        \iint \limits_{|(t,x,v)|>R}
        M_t^{(N)}(dx,dv)dt
        &\leq
        \int_0^T\int_{\mathbb R^6} 
        \dfrac {t^2+|x|^2+|v|^2}{R^2}M_t^{(N)}(dx,dv)dt\\
        &\leq
        \dfrac {1}{R^2}\left(2B T+\dfrac {T^3}{3}\right).
    \end{split}
\end{equation}
The claim now follows from Prohorov's criterion, see \cite{GS}, p. 362.

Conditions \eqref{MorreyAss} are satisfied whenever the same inequalities hold at $t = 0$ and the energies are uniformly bounded in  $N$, see \cite{M}, Theorem 5.2. See also \cite{D}, Proposition 3.1.  
\end{Exam}

The following shows that pointwise convergence and bounds
give the same results as in the previous theorem. This version is closer to the main result in \cite{D}.

Whereas Theorem \ref{MainMain} is useful for describing
the measurability of the $u$ and its assumptions are weaker
than those of Theorem \ref{Main} that follows, the pointwise convergence of measures  $M_t^{(N)}(dx, dv)
     \Rightarrow
      \widehat{M}_t(dx, dv)$  is perhaps of more interest for applications.
\begin{Thm} \label{Main}
Let $T>0$ be any fixed finite time.
Assume that there is subsequence of positive integers, and some $\epsilon >0$ such that for all  $N$'s of that subsequence and for almost all $t$ in $[0,T]$
the following hold: 
\begin{enumerate}
\item
$\displaystyle
     M_t^{(N)}(dx, dv)
     \Rightarrow
      \widehat{M}_t(dx, dv), 
     \ N \to \infty$,
\item
$\displaystyle\sup_N \int |u_N|^{2 + \epsilon}(t,x) \mu_t^{(N)}(dx) 
     <
    \infty. $
\end{enumerate}
Let $\widehat{\mu}_t$ be the first marginal of $\widehat{M}_t$ and $\widehat{u}(t,x)$ the corresponding barycentric projection. Then $\displaystyle \int \int |\widehat{u}|^{2 + \epsilon} (t,x) \widehat{\mu}_t(dx) dt < \infty$ and the continuity equation \eqref{ContinuityEqn} and the momentum equation \eqref{MomentumEqn} hold for $\widehat{\mu}_t$ and $\widehat{u}(t,x)$.
\end{Thm}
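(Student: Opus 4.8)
The plan is to reduce Theorem \ref{Main} to the already-established Theorem \ref{MainMain}: I will check that the pointwise hypotheses (1) and (2) force the two integrated hypotheses of Theorem \ref{MainMain} for the measure $\widehat{M}(dt,dx,dv) := \int \widehat{M}_t(dx,dv)\,dt$, apply Theorem \ref{MainMain} to $\widehat{M}$, and then match the objects it produces back to $\widehat{u}$ and $\widehat{\mu}_t$. Throughout I work along the fixed subsequence of the statement, relabelled as the full sequence.

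First I would produce the weak convergence $M^{(N)}(dt,dx,dv) \Rightarrow \widehat{M}(dt,dx,dv)$, which is exactly the content of Lemma \ref{limit measurable}. Take $X=[0,T]$ with $\mu=dt$ (a probability measure after the harmless normalization by $T$, which does not affect weak convergence), $Y=\R^6$, and the Borel measurable families $\nu_t^{(N)}=M_t^{(N)}$, already shown measurable in the Molecule Measures subsection. Hypothesis (1) of Theorem \ref{Main} is the required pointwise weak convergence $M_t^{(N)}\Rightarrow\widehat{M}_t$ for almost all $t$, so the Remark following Lemma \ref{limit measurable} applies and yields both the Borel measurability of $\{\widehat{M}_t\}$ and assertion (2), namely $\int M_t^{(N)}(dx,dv)\,dt \Rightarrow \int \widehat{M}_t(dx,dv)\,dt$. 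This is precisely hypothesis (1) of Theorem \ref{MainMain}, with $M=\widehat{M}$.

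Next I would obtain the $(2+\epsilon)$-moment bound. Since $M_t^{(N)}=(Id\times u_N)_\#\mu_t^{(N)}$, the pushforward formula gives $\int_{\R^6}|v|^{2+\epsilon}M_t^{(N)}(dx,dv)=\int|u_N|^{2+\epsilon}(t,x)\,\mu_t^{(N)}(dx)$, which by hypothesis (2) is bounded by a constant uniform in $N$ and in $t$. Integrating over $[0,T]$ and invoking the general Fubini Theorem \ref{genFub} for $M^{(N)}$ then yields
\begin{equation*}
\sup_N \int_{[0,T]\times\R^6} |v|^{2+\epsilon}\,M^{(N)}(dt,dx,dv) = \sup_N \int_0^T \int |u_N|^{2+\epsilon}(t,x)\,\mu_t^{(N)}(dx)\,dt < \infty,
\end{equation*}
which is hypothesis (2) of Theorem \ref{MainMain}. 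With both hypotheses verified, Theorem \ref{MainMain} applies to $M=\widehat{M}$ and delivers the continuity equation \eqref{ContinuityEqn} and the momentum equation \eqref{MomentumEqn} for its barycentric projection $u$ and the slices $\mu_t$ of $\mu=(pr^{1,2})_\#\widehat{M}$.

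The remaining step, which I expect to be the main obstacle, is to identify the objects produced by Theorem \ref{MainMain} with those named in Theorem \ref{Main}, i.e. $\mu_t=\widehat{\mu}_t$ and $u=\widehat{u}$ for almost all $t$ (respectively $\mu$-almost all $(t,x)$). For the marginals this is immediate: the $(t,x)$-marginal of $\widehat{M}=\int\widehat{M}_t\,dt$ is $\int\widehat{\mu}_t(dx)\,dt$, so the essential uniqueness of disintegration in Theorem \ref{dsntgrt} gives $\mu_t=\widehat{\mu}_t$ almost everywhere. For the velocities the point is a tower property: disintegrating each $\widehat{M}_t$ in $v$ as $\widehat{M}_t=\int\widehat{M}_{t,x}(dv)\,\widehat{\mu}_t(dx)$ and substituting into $\widehat{M}=\int\widehat{M}_t\,dt$ gives $\widehat{M}=\int\widehat{M}_{t,x}(dv)\,\mu(dt,dx)$, whence essential uniqueness forces $M_{t,x}=\widehat{M}_{t,x}$ for $\mu$-almost all $(t,x)$ and therefore $u=\widehat{u}$ and the full fluctuation integrands agree. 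The care needed here is verifying the joint $(t,x)$-measurability of the family $\{\widehat{M}_{t,x}\}$ assembled from the fibrewise disintegrations, so that the interchange of integrals above is legitimate; once this measurability is in hand the identification follows, and with it the bound $\int\int|\widehat{u}|^{2+\epsilon}(t,x)\,\widehat{\mu}_t(dx)\,dt<\infty$ inherited from the $L^{2+\epsilon}$ conclusion of Theorem \ref{MainMain}.
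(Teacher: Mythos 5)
Your reduction to Theorem \ref{MainMain} is exactly the paper's route: Lemma \ref{limit measurable} (with its Remark) gives $M^{(N)}(dt,dx,dv)\Rightarrow \widehat M := \int \widehat M_t(dx,dv)\,dt$, the pushforward identity $\int |v|^{2+\epsilon}\,M_t^{(N)}(dx,dv) = \int |u_N|^{2+\epsilon}(t,x)\,\mu_t^{(N)}(dx)$ converts hypothesis (2) into \eqref{Ass2new}, and the identification $\mu_t=\widehat\mu_t$ via uniqueness of the $dt$-disintegration also matches the paper.

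The gap is in your ``tower property'' step, and you have named it yourself without resolving it: to write $\widehat M = \int \widehat M_{t,x}(dv)\,\mu(dt,dx)$ you need the family $\{\widehat M_{t,x}\}$, obtained by disintegrating each $\widehat M_t$ separately in $v$, to be Borel measurable \emph{jointly} in $(t,x)$ --- otherwise that expression is not even defined via Theorem \ref{genFub}, and the uniqueness clause of Theorem \ref{dsntgrt} cannot be invoked on it. For each fixed $t$ the disintegration $\widehat M_{t,x}$ is determined only up to a $\widehat\mu_t$-null set of $x$'s, and there is no canonical way to choose versions across different $t$'s that fit together measurably; producing such a selection is a genuine measurable-selection problem, not a routine verification, and it is precisely the difficulty the paper's proof is engineered to avoid. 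The paper runs the identification in the opposite direction: it starts from the family $M_{t,x}$ obtained by disintegrating $M=\widehat M$ with respect to $\mu(dt,dx)$, which Theorem \ref{dsntgrt} guarantees to be jointly Borel measurable, observes that for each fixed $t$ it is then $x$-measurable, defines $M_t(dx,dv):=\int M_{t,x}(dv)\,\mu_t(dx)$, checks by Fubini that $\int M_t\,dt = M = \int \widehat M_t\,dt$, and concludes $M_t=\widehat M_t$ for almost every $t$ from uniqueness of the $dt$-disintegration; for each such $t$ the family $\{M_{t,x}\}_x$ is therefore a disintegration of $\widehat M_t$ with respect to $\widehat\mu_t=\mu_t$, so the fixed-$t$ uniqueness in Theorem \ref{dsntgrt} gives $M_{t,x}=\widehat M_{t,x}$ for $\widehat\mu_t$-almost all $x$, hence $u=\widehat u$. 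In short: rather than assembling fibrewise disintegrations into a joint one (which needs the measurability you lack), slice the joint disintegration into fibrewise ones (whose measurability comes for free). Replacing your last step by this reversed argument completes your proof.
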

\begin{proof}

Lemma \ref{limit measurable}  and the first assumption here imply $\displaystyle \int M_t^{(N)}(dx,dv)dt
\Rightarrow  \int \widehat{M}_t(dx,dv)dt =: M(dt,dx,dv)$. This is  the first assumption of Theorem \ref{MainMain}. Also the second assumption here implies the second assumption of Theorem \ref{MainMain}. Therefore the continuity and momentum equations are satisfied for $u$ and $\mu_t$ as defined in Theorem \ref{MainMain} for $M$. 

By definition of $M_{t,x}(dv)$ in  Theorem \ref{MainMain}, $\left\{M_{t,x}(dv)\right\}$ is Borel measurable in $(t,x)$, therefore for fixed $t$ Borel measurable in $x$. Then $M_t(dx,dv):=\displaystyle \int  {M}_{t,x}(dv) \mu_t(dx) $ is well defined. Next we show $\widehat{M}_t=M_t$ for all most $t$. 
Notice that for any $B$ Borel in $[0,T]\times \mathbb R^6$, by the definitions of $M_{t,x}(dv)$ and $\mu_t(dx)$
\begin{equation}
\begin{split}
     \int \chi_B M(dt,dx,dv)
     &=
     \int 
     \left(
     \int  \chi_B {M}_{t,x}(dv)
     \right)
     \mu(dt,dx)\\
     &=
     \int
     \left(
     \int 
     \left(
     \int \chi_B {M}_{t,x}(dv)
     \right)
     \mu_t(dx)
     \right)
     dt\\
     &=
     \int
     \left(
     \int
      \chi_B M_t(dx,dv)
     \right)
      dt.
\end{split}
\end{equation}
Therefore $M(dt,dx,dv)=\displaystyle \int M_t(dx,dv)dt$. The uniqueness of the disintegration of $M$ with respect to $dt$ gives $\widehat{M}_t(dx,dv)=M_t(dx,dv)$ for almost all $t$, therefore $\widehat{\mu}_t(dx)=\mu_t(dx)$ for almost all $t$. Now the uniqueness of the disintegration of $\widehat{M}_t(dx,dv)$ with respect to $\widehat{\mu}_t(dx)$ gives $M_{t,x}(dv)=\widehat{M}_{t,x}(dv)$ for $dt$-almost all $t$ and $\widehat{\mu}_t$-almost all $x$. This implies that $\widehat{u}(t,x)=u(t,x)$ for $dt$-almost all $t$ and $\widehat{\mu}_t$-almost all $x$. Then the assertions of the Theorem follow from their counterparts in Theorem \ref{MainMain} for $u(t,x)$ and $\mu_{t}(dx)$. 
\end{proof}

\subsection{Subsequences for bounded velocities and bounded accelerations} \label{bounded=>assumptionssatisfied}

The second assumption of Theorem \ref{Main} clearly holds whenever velocities are uniformly bounded in $N$ for all $t$ in $[0,T]$ since $\mu_t^{(N)}$ are probability measures for all $N$ and $t$.
We now show that if the solutions of the $N$-Hamiltonian systems have velocities and accelerations uniformly bounded in $N$ for all $t \in [0,T]$ then the first assumption of Theorem \ref{Main} also holds. The main tool for this is the use of characteristic functions rather than the Prohorov's criterion, cf. \cite{M}, p. 291. We then use L\'evy continuity, which holds without passing to subsequences, for the convergence of measures.

\begin{Prop}
Assume uniformly bounded in $N$ accelerations (and therefore velocities and positions): 
\begin{equation}
\begin{split}
    \left|
    x_i^{(N)}(t)
    \right|,
    \left|
    u_i^{(N)}(t)
    \right|, 
    \left|
    \frac{d}{dt}u_i^{(N)}(t)
    \right|
    \leq B_T, 
    \quad t \in [0,T], 
    \quad i = 1, \ldots, N, 
    \quad N\in \N.
\end{split}
\end{equation}
Then there 
is a subsequence $N_i$ of $N$'s, independent of $t$, such that for all $t$ in $[0, T]$, 
\begin{equation}
\begin{split}
     M_t^{(N_i)}(dx,dv) 
     \Rightarrow 
     M_t(dx,dv).
\end{split}
\end{equation}
\end{Prop}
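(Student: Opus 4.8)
The plan is to pass to characteristic functions, where the bounded--acceleration hypothesis converts directly into equicontinuity in $t$, then to extract a single subsequence by Arzel\`a--Ascoli, and finally to apply L\'evy continuity to recover weak convergence for every $t$ at once. First I would introduce, for each $N$ and each $t\in[0,T]$, the characteristic function of $M_t^{(N)}$,
\[
   \psi_N(t,\xi,\eta)
   :=
   \int_{\R^6} e^{i(\xi\cdot x+\eta\cdot v)}\,M_t^{(N)}(dx,dv)
   =
   \frac1N\sum_{i=1}^{N} e^{i(\xi\cdot x_i^{(N)}(t)+\eta\cdot u_i^{(N)}(t))},
   \qquad (\xi,\eta)\in\R^6 .
\]
These are bounded by $1$ uniformly in $N$, $t$, and $(\xi,\eta)$. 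Differentiating in $t$ and using $\dot x_i^{(N)}=u_i^{(N)}$ from the first of \eqref{Hamp}, together with the hypotheses $|u_i^{(N)}|,|\tfrac{d}{dt}u_i^{(N)}|\le B_T$, gives
\[
   \left|\partial_t\psi_N(t,\xi,\eta)\right|
   \le
   \frac1N\sum_{i=1}^{N}\Big(|\xi|\,|u_i^{(N)}(t)|+|\eta|\,\big|\tfrac{d}{dt}u_i^{(N)}(t)\big|\Big)
   \le
   B_T\,(|\xi|+|\eta|),
\]
so the maps $t\mapsto\psi_N(t,\xi,\eta)$ are Lipschitz in $t$ with a constant uniform in $N$ (for each fixed $(\xi,\eta)$). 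Likewise, from $|e^{ia}-e^{ib}|\le|a-b|$ and $|x_i^{(N)}|,|u_i^{(N)}|\le B_T$,
\[
   \big|\psi_N(t,\xi,\eta)-\psi_N(t,\xi',\eta')\big|
   \le
   B_T\big(|\xi-\xi'|+|\eta-\eta'|\big),
\]
uniformly in $N$ and $t$. Thus $\{\psi_N\}$ is uniformly bounded and, on every set $[0,T]\times K$ with $K\subset\R^6$ compact, uniformly equicontinuous.

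Next I would apply the Arzel\`a--Ascoli theorem on the sets $[0,T]\times \overline{B(0,m)}$, $m\in\N$, together with a diagonal argument, to extract a single subsequence $N_i$, independent of $t$, along which $\psi_{N_i}\to\psi$ uniformly on compact subsets of $[0,T]\times\R^6$. The limit $\psi$ is continuous, bounded by $1$, and satisfies $\psi(t,0,0)=1$ for every $t$. The crucial gain over the tightness argument of Example \ref{tightness of Mdt} is that the extraction is carried out once and for all in the joint variable $(t,\xi,\eta)$, so the resulting subsequence serves every $t$ simultaneously; a Prohorov-type argument would instead produce a subsequence depending on $t$.

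Finally, I would fix $t\in[0,T]$ and view $\psi(t,\cdot,\cdot)$ as the pointwise limit of the characteristic functions $\psi_{N_i}(t,\cdot,\cdot)$ of the probability measures $M_t^{(N_i)}$. Since $\psi(t,\cdot,\cdot)$ is continuous at the origin, L\'evy's continuity theorem identifies it as the characteristic function of a probability measure $M_t(dx,dv)$ and yields $M_t^{(N_i)}(dx,dv)\Rightarrow M_t(dx,dv)$; because L\'evy continuity requires no further passage to a subsequence, this holds for the same $N_i$ and for every $t$, which is the assertion. The main obstacle is exactly the uniformity of the subsequence in $t$, and everything hinges on the $t$-equicontinuity estimate above — the one place where boundedness of the \emph{accelerations}, and not merely of the velocities, is genuinely used.
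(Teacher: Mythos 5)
Your proposal is correct and follows essentially the same route as the paper: characteristic functions of $M_t^{(N)}$, equicontinuity in $(t,\xi,\eta)$ derived from the uniform bounds on positions, velocities, and accelerations, Arzel\`a--Ascoli with a diagonal argument to extract one subsequence uniform in $t$, and then L\'evy continuity at each fixed $t$ to conclude without any further extraction. The only cosmetic difference is that you verify equicontinuity via explicit Lipschitz estimates, while the paper bounds the $t$-, $y$-, and $w$-derivatives of $\psi_N$; the substance is identical.
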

\begin{proof}
The characteristic of $M_t^{(N)}$ is \begin{equation}
\begin{split}
     \psi_N(t, y, w)
     &=
     \int e^{\displaystyle i(y\cdot x + w\cdot v)}
     \, M_t^{(N)}(dx, dv)\\
     &=
     \frac1N\sum_{j =1}^{N}
     e^{\displaystyle i\,(\, y\cdot x_j^{(N)}(t) + w\cdot u^{(N)}_j(t)\,)}.
\end{split}
\end{equation}
Therefore
\begin{equation}
\begin{split}
   \partial_t \psi_N(t, y, w)
   &=
   i \frac1N
   \sum_{j = 1}^{N}
     e^{\displaystyle i\,\left(\, y\cdot  x_j^{(N)}(t) + w\cdot  u^{(N)}_j(t)\,\right)}
     \left(
     \, y\cdot u_j^{(N)}(t) + w\cdot \left(u^{(N)}_j\right)'(t)
     \,\right),
     \\
    \nabla_y \psi_N(t, y, w) 
    &=
    i\frac1N\sum_{j = 1}^{N}
     e^{\displaystyle i\,\left(\, y\cdot x_j^{(N)}(t) + w\cdot u^{(N)}_j(t)\,\right)}
     x_j^{(N)}(t),
     \\
     \nabla_w \psi_N(t, y, w) 
     &=
     i\frac1N\sum_{j=1}^{N}
     e^{\displaystyle i\,\left(\, y\cdot  x_j^{(N)}(t) + w\cdot  u^{(N)}_j(t)\,\right)}
     u^{(N)}_j(t) ,         
\end{split}
\end{equation}
are all bounded.
In particular, for each fixed $T$ and $k \in \N$, there is  
uniformly convergent subsequence of $\psi_N(t,y)$ on $[0,T] \times B_k(0)$ by Arzela-Ascoli. Therefore, by taking $k \to \infty$  and diagonalizing, there is subsequence $\psi_{N_i}$ which converges for all $t$ and $y$ in $[0,T] \times \R^3$ (and which still converges uniformly on any $[0,T] \ \times$ compact). The limit is, of course, continuous in $t$, $y$, and $w$ as the uniform limit of continuous functions.
Apply now the L\'evy continuity theorem \cite{JP}, p. 167, for any fixed $t$ on this subsequence to find that, without resorting to any further subsequence, $M_t^{N_i}(dx,dv)$ converges weakly for all $t$. 
\end{proof}

{
\section{Solutions with bounds uniform in $N$.} \label{PicardExample}

For each {fixed $N$} the solutions of the $N$-system stay of course bounded on finite time intervals, by continuity. Given uniform bounds on the energy and the initial conditions,  the averages 
$\displaystyle \frac1N \sum \left |x_i^{(N)}(t)\right |^2$, 
$\displaystyle \frac1N \sum \left |u_i^{(N)}(t)\right |^2$ 
 stay uniformly bounded in $N$ on finite time intervals, cf. \cite{M}, Theorem 5.2. 
Here we show that there exists a class of examples where the {solutions themselves stay bounded uniformly in $N$ on any finite interval. This class then satisfies all assumptions of Theorem \ref{Main}. 

We are interested then in uniform in $N$ estimates for the
system of ODEs (where the solutions now show their dependence on $N$):
\begin{equation} \label{Ham}
\begin{split}
    \frac{d x_i^{(N)}}{dt}
    &=
    u_i^{(N)},
    \\ 
    \frac{du_i^{(N)}}{dt}
    &=
    -
    \sum_{\substack{j=1\\ j\neq i}}^{N}
    \frac{1}{\sigma_N}
    \Phi' 
    \left( 
    \frac{\left|x_i^{(N)}(t) - x_j^{(N)}(t)\right|}{\sigma_N}
    \right)
    \frac{x_i^{(N)}(t) - x_j^{(N)}(t)}
     {\left|x_i^{(N)}(t) - x_j^{(N)}(t)\right|},\quad
   1 \leq i \leq N,
\end{split}
\end{equation}
with $\sigma_N \to 0$. 

\begin{Thm} \label{PicardThm}
Assume $\Phi'$ decreasing on $(0, \infty)$,
$\left\{x_i(0), u_i(0)\right\}_{i \in \N}$ satisfying
\begin{equation}
\begin{split}
     \left|x_i(0)\right| \leq X, \quad
     \left | u_i(0) \right | \leq U, \quad \text{for \ all} \quad i\in \mathbb N,
\end{split}
\end{equation}
$B_N$ such that for $X_{ij} := x_i(0) - x_j(0)$ and $U_{ij} := u_i(0) - u_j(0)$
\begin{equation} \label{B_NCondition}
\begin{split}
     X_{ij}\cdot U_{ij}
     -
     2T\left|X_{ij}\right|B_N
     -
     3T^2\left |U_{ij}\right |B_N-
     2T^3B_N^2
     \geq
     0,
\end{split}
\end{equation}
and $\sigma_N$ such that
\begin{equation} \label{sigma_N}
\begin{split}
    -
    \frac{1}{\sigma_N}
    \sum_{\substack{j = 1\\ j \neq i}}^{N}
    \Phi' 
    \left( 
    \frac{\left|X_{ij}\right|}{\sigma_N}
    \right) 
    < B_N,\ \text{for all} \ i = 1, \ldots, N.
\end{split}
\end{equation}
Then
the solutions of \eqref{Ham} with initial conditions $\{x_i(0),u_i(0)\}_{i=1, \ldots,N}$  satisfy
\begin{equation}\label{Bounds}
\begin{split}
    \left|x_i^{(N)}( t) - x_j^{(N)}(t)\right|\ \big \uparrow\ \text{in} \ t \in [0,T] , \\
    \left| \frac{d}{dt}u_i^{(N)}(t)\right|
    \leq
    B_N, \quad t \in [0,T], 
\end{split}
\end{equation} 
(and, therefore
$ \left|x_i^{(N)}(t)\right|
   \leq
    X
     +
    UT+B_NT^2$,
    $\left|u_i^{(N)}(t)\right|
    \leq
    U + B_N T $, 
    $t \in [0,T])$.
\end{Thm}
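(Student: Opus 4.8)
The plan is to run a continuity (bootstrap) argument on $[0,T]$ that propagates the two conclusions of \eqref{Bounds} simultaneously, since they are coupled: the acceleration bound follows from a lower bound on the pairwise separations, while the distance monotonicity in turn follows from the acceleration bound holding on the entire preceding interval. Throughout I write $r_{ij}(t) = x_i^{(N)}(t) - x_j^{(N)}(t)$ and $w_{ij}(t) = u_i^{(N)}(t) - u_j^{(N)}(t) = r_{ij}'(t)$, so that $r_{ij}'' = \frac{d}{dt}u_i^{(N)} - \frac{d}{dt}u_j^{(N)}$. Existence of the solution as long as no two molecules collide is furnished by the result of Section \ref{Basics}, and the a priori estimates produced below keep the configuration inside $\frak G$, so the flow never leaves the domain.

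\emph{Distance lower bound implies acceleration bound.} Suppose on some $[0,\tau]$ one has $|r_{ij}(t)| \ge |X_{ij}|$ for all $i\neq j$. Bounding the vector sum in \eqref{Ham} by the sum of the magnitudes of its terms gives $|\frac{d}{dt}u_i^{(N)}(t)| \le -\frac1{\sigma_N}\sum_{j\neq i}\Phi'(|r_{ij}(t)|/\sigma_N)$ (using $\Phi'\le 0$). This is where the monotonicity hypothesis on $\Phi'$ enters: the per-pair force $r\mapsto -\frac1{\sigma_N}\Phi'(r/\sigma_N)$ is nonincreasing in the separation, so each summand at time $t$ is dominated by its value at the initial separation, and the right-hand side is dominated by $-\frac1{\sigma_N}\sum_{j\neq i}\Phi'(|X_{ij}|/\sigma_N)$, which is $<B_N$ by \eqref{sigma_N}. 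Hence $|\frac{d}{dt}u_i^{(N)}(t)| < B_N$ on all of $[0,\tau]$, with strict inequality.

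\emph{Acceleration bound implies distance monotonicity.} If $|\frac{d}{dt}u_i^{(N)}| \le B_N$ on $[0,t]$ then $|w_{ij}'| = |r_{ij}''| \le 2B_N$ there. Writing $r_{ij}(t) = X_{ij} + \int_0^t w_{ij}$ and $w_{ij}(t) = U_{ij} + \int_0^t w_{ij}'$, I expand $r_{ij}(t)\cdot w_{ij}(t)$ into four pieces and estimate each using $|w_{ij}'|\le 2B_N$ and $t\le T$: the cross term $X_{ij}\cdot\int_0^t w_{ij}'$ contributes at worst $-2T|X_{ij}|B_N$; the term $U_{ij}\cdot\int_0^t w_{ij}$ has nonnegative leading piece $t|U_{ij}|^2$ and remainder at worst $-T^2|U_{ij}|B_N$; and the product $\big(\int_0^t w_{ij}\big)\cdot\big(\int_0^t w_{ij}'\big)$ is at worst $-2T^2|U_{ij}|B_N - 2T^3 B_N^2$. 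Summing, $r_{ij}(t)\cdot w_{ij}(t)$ is bounded below exactly by the left-hand side of \eqref{B_NCondition}, hence is $\ge 0$; this precise bookkeeping of constants is the reason \eqref{B_NCondition} takes the stated form. Since $\frac{d}{dt}|r_{ij}|^2 = 2\,r_{ij}\cdot w_{ij}$, the separations are nondecreasing on $[0,t]$, so $|r_{ij}(t)| \ge |X_{ij}|$, which is exactly the hypothesis needed to re-enter the previous step.

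\emph{Closing the bootstrap.} I would set $\tau^* = \sup\{\tau\in[0,T] : |\frac{d}{dt}u_i^{(N)}(t)|\le B_N \text{ for all } i \text{ and all } t\le\tau\}$. At $t=0$ the separations equal $|X_{ij}|$, so \eqref{sigma_N} gives $|\frac{d}{dt}u_i^{(N)}(0)| < B_N$ strictly and $\tau^*>0$. On $[0,\tau^*]$ the second step yields distance monotonicity, which feeds the first step to upgrade the acceleration estimate to the strict bound $|\frac{d}{dt}u_i^{(N)}(t)| < B_N$; continuity of the accelerations then forces $\tau^*=T$, since otherwise the strict bound would persist in a right neighborhood of $\tau^*$, contradicting maximality. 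The main obstacle is exactly this circular dependence between the two conclusions, and the key that unlocks it is the \emph{strict} inequality in \eqref{sigma_N}, which supplies the openness needed to drive $\tau^*$ up to $T$. Finally, once \eqref{Bounds} holds, the remaining bounds follow by integrating $|\frac{d}{dt}u_i^{(N)}|\le B_N$ once and twice: $|u_i^{(N)}(t)| \le U + B_N t$ and $|x_i^{(N)}(t)| \le X + (U+B_N T)t \le X + UT + B_N T^2$.
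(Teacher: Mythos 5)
Your proposal is correct and follows essentially the same route as the paper's own proof: the identical expansion of $\frac{d}{dt}|x_i^{(N)}-x_j^{(N)}|^2$ with exactly the constants of \eqref{B_NCondition}, the monotonicity of the per-pair force to keep the force sum below its initial value (which is $<B_N$ by \eqref{sigma_N}), and the strictness of that inequality to run a continuity argument. The only difference is cosmetic: you close the loop via a supremum/bootstrap time $\tau^*$, whereas the paper argues by contradiction at the first time $t_N$ at which the force sum $F_i^{(N)}$ reaches $B_N$.
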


\begin{Rmk} It is necessary, from \eqref{B_NCondition} that 
the increments of positions and velocities are ``alligned" in the sense that
\begin{equation} \label{align}
\begin{split}
    X_{ij} \cdot U_{ij}
    =
     \left(x_i(0)
    -
    x_j(0)\right)
    \cdot
    \left(
      u_i(0) - u_j(0)
      \right)> 0. 
\end{split}
\end{equation}
\end{Rmk}

\begin{proof}[Proof of Theorem \ref{PicardThm}]
Let 
\begin{equation}
   \begin{split}
       A_i^{(N)}(t)
       =
       -
    \sum_{\substack{j=1\\ j\neq i}}^{N}
    \frac{1}{\sigma_N}
    \Phi' 
    \left( 
    \frac{\left|x_i^{(N)}(t) - x_j^{(N)}(t)\right|}{\sigma_N}
    \right)
    \frac{x_i^{(N)}(t) - x_j^{(N)}(t)}
     {\left|x_i^{(N)}(t) - x_j^{(N)}(t)\right|},
   \end{split}
\end{equation}
and
\begin{equation}
   \begin{split}
       F_i^{(N)}(t)
       =
       -
    \sum_{\substack{j=1\\ j\neq i}}^{N}
    \frac{1}{\sigma_N}
    \Phi' 
    \left( 
    \frac{\left|x_i^{(N)}(t) - x_j^{(N)}(t)\right|}{\sigma_N}
    \right).
   \end{split}
\end{equation}
Then 
\begin{equation}
   \begin{split}
       \left|A_i^{(N)}(t)\right|
       \leq
       F_i^{(N)}(t).
   \end{split}
\end{equation}
Suppose $t=t_N$ is the first time such that $F_i^{(N)}(t_N)=B_N$. By continuity, $t_N>0$. Then for $0\leq t\leq t_N$,
\begin{equation} \label{A<B}
    \begin{split}
         \left|A_i^{(N)}(t)\right|
        \leq
        B_N,
    \end{split}
\end{equation}
and
\begin{equation}
\begin{split}
     &\dfrac12 \dfrac{d}{dt}
     \left|x_i^{(N)}(t)-x_j^{(N)}(t)\right|^2
     =
     \left(x_i^{(N)}(t)-x_j^{(N)}(t)\right)
     \cdot
     \left(u_i^{(N)}(t)-u_j^{(N)}(t)\right)\\
     =&
     \left(
     X_{ij}
     +
     \int_0^t 
     \left(\int_0^s \left(A_i^{(N)}(q)-A_j^{(N)}(q)\right) dq+U_{ij}\right)
     ds
     \right)
     \cdot
     \left(
     U_{ij}+\int_0^t\left(A_i^{(N)}(s)-A_j^{(N)}(s)\right)ds
     \right),
\end{split}
\end{equation}
after using the equations of motion. Expanding and using \eqref{A<B}, this is estimated from below by
\begin{equation}
\begin{split}
    &X_{ij}\cdot U_{ij}
     -
     \left|U_{ij}\right|
     \int_0^t \int_0^s 2B_N dqds
     -
     \left|X_{ij}\right|
     \int_0^t2B_N ds
     -
     t
     \left | U_{ij}\right |
     \int_0^t2B_N ds
     -
     \int_0^t2B_N ds
     \int_0^t\int_0^s2B_Ndqds\\
     &\geq
     X_{ij}\cdot U_{ij}
     -
     2T\left|X_{ij}\right|B_N
     -
     3T^2\left |U_{ij}\right | B_N-
     2T^3B_N^2
     ,
\end{split}
\end{equation}
which, by \eqref{B_NCondition}, is positive.
Then $|x_i(t)-x_j(t)|$ is increasing on $0\leq t\leq t_N$. By the monotonicity of $\Phi'$, 
\begin{equation}
   \begin{split}
       F_i^{(N)}(t_N)\leq F_i^{(N)}(0)<B_N,
   \end{split}
\end{equation}
which contradicts our assumption that $F_i^{(N)}(t_N)=B_N$. Therefore for all $0\leq t\leq T$,
\begin{equation}
    \begin{split}
        &\left|\dfrac{d}{dt}u_i^{(N)}(t)\right|
        =\left|A_i^{(N)}(t)\right|\leq B_N,\\
        &\left|u_i^{(N)}(t)\right|
        =
        \left|u_i(0)+\int_0^t A_i^{(N)}(s)ds\right|
        \leq
        U+B_NT,\\
        &\left|x_i^{(N)}(t)\right|
        =
        \left|x_i(0)+\int_0^t u_i^{(N)}(s)ds\right|
        \leq
        X+UT+B_NT^2,
    \end{split}
\end{equation}
and $\left|x_i^{(N)}(t)-x_i^{(N)}(t)\right|$ increases on $[0,T]$. 
\end{proof}

Note that boundedness of positions implies
\begin{equation} \label{BolzanoWeierstrass}
\begin{split}
    \inf_{\substack{i,j \in \N\\ i\neq j}} 
    \left |x_i(0) - x_j(0)\right | = 0.
\end{split}
\end{equation}
Then \eqref{B_NCondition} gives
\begin{equation} \label{someB_N}
\begin{split}
     B_N^2 
     \leq 
     \dfrac
     {\displaystyle \min_{1\leq i \neq j \leq N}X_{ij}\cdot U_{ij}}
     {2T^3}
     \leq
      2U
      \dfrac
     {\displaystyle \min_{1\leq i \neq j \leq N}\left|X_{ij}\right|}
     {2T^3}\to 0.
\end{split}
\end{equation}

\begin{Exam}
The alignment condition \eqref{align} is easily satisfied as the following examples show: 
\begin{enumerate}
\item  $u_i(0) = \lambda x_i(0)$, $0 < \lambda $ (bursts).
\item
All $x_i(0)$'s are on the $(x,y)$ plane and all $u_i(0)$'s are of the form $u_i(0) = (\alpha, \beta, \pm \gamma)$ for some fixed $\alpha, \beta, \gamma$.
\item
All $x_i(0)$'s are of the form $x_i(0) = (a_i, b_i, 0)$ and the $u_i(0)$'s are $u_i(0) = (a_i,b_i, c_i)$.
\end{enumerate}
\end{Exam}

\begin{Exam}
For $-\Phi'(r) = r^{-p}$, $p>1$, \eqref{sigma_N} becomes
\begin{equation}
\begin{split}
    \sigma_N^{p-1}
    \sum_{\substack{j = 1\\ j \neq i}}^{N}
    \frac{1}{\left|X_{ij}\right|^p}
    \leq B_N.
\end{split}
\end{equation}
Then it is enough to set 
\begin{equation}\label{sigma_N fixed}
\begin{split}
     {\sigma_N}
     = 
     \left(\frac{B_N}{N}\right)^{1/(p-1)} 
     \min_{\substack{1\leq i,j\leq N\\ i\neq j}}
     {\left|X_{ij}\right|^{p/(p-1)}}.
\end{split}
\end{equation}
In addition, in the ``burst'' case, i.e. $u_i(0)=\lambda x_i(0)$, \eqref{B_NCondition} is satisfied if we set \begin{equation}
\begin{split}
    B_N
    =
    \frac
     {\displaystyle \min_{1\leq i \neq j \leq N}X_{ij}\cdot U_{ij}}
     {4XT + 6UT^2 + 2T^3}
     =
     \frac
     {\displaystyle \min_{1\leq i \neq j \leq N}\lambda \left|X_{ij}\right|^2}
     {4XT + 6UT^2 + 2T^3}.
\end{split}
\end{equation}
Since we are adding molecules in a bounded domain, a reasonable case is\begin{equation}
   \begin{split}
       \displaystyle \min_{1\leq i \neq j \leq N} \left|X_{ij}\right|
       =
       \alpha N^{-1/3}.
   \end{split}
\end{equation}
With this choice, 
\begin{equation}
   \begin{split}
       B_N
       =
        \frac
     {\lambda\alpha^2 }
     {4XT + 6UT^2 + 2T^3}
      N^{-2/3}
     =
     \beta N^{-2/3}
   \end{split}
\end{equation}
Then \eqref{sigma_N fixed} becomes
\begin{equation}
   \begin{split}
       \sigma_N
       =
       \beta^{1/(p-1)}\alpha ^{p/(p-1)}
     N^{-(5+p)/3(p-1)} \to \alpha N^{-1/3}, \quad
     p\to \infty.
   \end{split}
\end{equation}
\end{Exam}

\begin{Cor}
There are $\sigma_N$'s such that the uniform estimates are always satisfied for Maxwellian $-\Phi'(r) = r^{-p}$, $p>1$ and bounded and aligned initial conditions.  
\end{Cor}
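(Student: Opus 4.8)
The plan is to assemble the corollary directly from Theorem~\ref{PicardThm} and the two preceding examples: I would exhibit explicit $B_N$ and $\sigma_N$, verify the two hypotheses \eqref{B_NCondition} and \eqref{sigma_N} of Theorem~\ref{PicardThm} for the Maxwellian force, and then observe that the resulting uniform bounds \eqref{Bounds} feed into the Proposition in Section~\ref{bounded=>assumptionssatisfied} and thence into Theorem~\ref{Main}. The observation that makes this clean is that \eqref{B_NCondition} and \eqref{sigma_N} \emph{decouple}: the former constrains only $B_N$, while the latter, for each fixed $B_N>0$, is a single upper bound on $\sigma_N$. One therefore fixes $B_N$ first and then takes $\sigma_N$ small but positive.

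First I would record the monotonicity needed: the force $-\Phi'(r)=r^{-p}$ is decreasing on $(0,\infty)$ for every $p>1$, which is exactly the property invoked in the proof of Theorem~\ref{PicardThm} to conclude $F_i^{(N)}(t_N)\le F_i^{(N)}(0)$ once the distances increase. For bounded and aligned data we have $|x_i(0)|\le X$, $|u_i(0)|\le U$, and $X_{ij}\cdot U_{ij}>0$ by \eqref{align}, so for each finite $N$ the number $\min_{1\le i\neq j\le N}X_{ij}\cdot U_{ij}$ is a finite minimum of positive reals, hence positive. I would then set
\begin{equation}
     B_N
     =
     \frac{\displaystyle \min_{1\le i\neq j\le N} X_{ij}\cdot U_{ij}}{4XT + 6UT^2 + 2T^3},
\end{equation}
and verify \eqref{B_NCondition} using $|X_{ij}|\le 2X$, $|U_{ij}|\le 2U$, and $B_N^2\le B_N$; the last bound is legitimate once $B_N\le 1$, which holds for all large $N$ because \eqref{someB_N} forces $B_N\to 0$.

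Next I would choose $\sigma_N$. For the Maxwellian, \eqref{sigma_N} reads $\sigma_N^{p-1}\sum_{j\neq i}|X_{ij}|^{-p}\le B_N$, and the choice
\begin{equation}
     \sigma_N
     =
     \left(\frac{B_N}{N}\right)^{1/(p-1)}
     \min_{1\le i\neq j\le N} |X_{ij}|^{p/(p-1)}
\end{equation}
satisfies it, since $\sum_{j\neq i}|X_{ij}|^{-p}\le N\min_{i\neq j}|X_{ij}|^{-p}$; this $\sigma_N$ is positive for each $N$ and tends to $0$. With \eqref{B_NCondition} and \eqref{sigma_N} in force, Theorem~\ref{PicardThm} gives \eqref{Bounds}, and since $B_N\to 0$ is in particular bounded in $N$, the positions, velocities, and accelerations are bounded uniformly in $N$, $t\in[0,T]$, and $i$.

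These uniform bounds are precisely the hypotheses of the Proposition in Section~\ref{bounded=>assumptionssatisfied}, which then supplies the common-in-$t$ weakly convergent subsequence that is assumption (1) of Theorem~\ref{Main}, while the uniform velocity bound gives assumption (2). The one point I expect to be the genuine, if mild, obstacle is bookkeeping around the shrinkage: boundedness of positions forces $\min_{i\neq j}|X_{ij}|\to 0$ by \eqref{BolzanoWeierstrass}, so $B_N\to 0$ and with it $\sigma_N\to 0$, and one must confirm this does not render conditions \eqref{B_NCondition} and \eqref{sigma_N} inconsistent. It does not, exactly because for each finite $N$ both are met by the explicit positive $B_N$ and $\sigma_N$ displayed above, the two conditions constraining disjoint quantities.
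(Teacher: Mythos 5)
Your proposal is correct and takes essentially the same route as the paper: the paper justifies this corollary through the Example immediately preceding it, using exactly your formulas for $\sigma_N$ and $B_N$ (with $B_N=\min_{i\neq j}X_{ij}\cdot U_{ij}/(4XT+6UT^2+2T^3)$), then feeding Theorem \ref{PicardThm} into the Proposition of Section \ref{bounded=>assumptionssatisfied} and Theorem \ref{Main}. Your write-up is in fact slightly more careful than the paper's, since you verify \eqref{B_NCondition} for general aligned data rather than only the burst case, and you make explicit the requirement $B_N\leq 1$ (equivalently, large $N$, or capping $B_N$ at $1$) needed for the step $B_N^2\leq B_N$.
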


Under the same initial conditions and choice of $\sigma_N$ as in Theorem \ref{PicardThm}, and after iterating the proof of the Theorem, the standard Picard iteration converges for all $t$ in $[0,T]$. Details appeared in the second author's thesis \cite{J}.

\begin{Thm}
Assume $\Phi$, $\{x_i(0), u_i(0)\}_{i \in \N}$, and $\sigma_N$ as in the statement of Theorem \ref{PicardThm}. Then all conditions of Theorem \ref{Main} are satisfied, and $\frak{I}_{\Phi}$ now vanishes. In particular, the momentum equation is
\begin{equation} \label{w/oPhi}
\begin{split}
    & \int_0^T\int_{\R^3}
   \frac{\partial \varphi}{\partial t}
    (t, x)  u (t, x)  \mu_t(d x) dt
   +
   \int_0^T\int_{\R^3}
   \nabla \varphi(t, x)\cdot  u(t, x)   u(t, x) \mu_t (d x)  dt 
    \\
    &
    =
    -
    \int_0^T\int_{\R^3}
    \nabla \varphi(t,x) 
    \cdot\int ( v - u(t, x) )
    ( v -  u(t, x))\, M_{t, x}(dv)\ \mu_t (d x) dt
    .
\end{split}
\end{equation}
\end{Thm}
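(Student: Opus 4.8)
The plan is to verify the two hypotheses of Theorem~\ref{Main} and then to prove directly that ${\frak I}_{\Phi}(\varphi) = 0$, which collapses \eqref{MomentumEqn} to \eqref{w/oPhi}. Everything rests on the estimates already produced by Theorem~\ref{PicardThm}, together with the fact that the admissible scale forces the interaction strength to zero.

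First I would record that Theorem~\ref{PicardThm} supplies uniform-in-$N$ bounds on positions, velocities, and accelerations on $[0,T]$, namely $|x_i^{(N)}(t)| \le X + UT + B_N T^2$, $|u_i^{(N)}(t)| \le U + B_N T$, and $|\frac{d}{dt}u_i^{(N)}(t)| \le B_N$, while \eqref{someB_N} forces $B_N \to 0$. In particular $\sup_N B_N < \infty$, so a single constant $B_T$ dominates all three quantities for all $N$. This is exactly the hypothesis of the Proposition in Section~\ref{bounded=>assumptionssatisfied}, which then yields a subsequence $N_i$, independent of $t$, along which $M_t^{(N_i)}(dx,dv) \Rightarrow M_t(dx,dv)$ for every $t \in [0,T]$; this is the first assumption of Theorem~\ref{Main} (with $\widehat{M}_t = M_t$). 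The second assumption is immediate from the uniform velocity bound: since each $\mu_t^{(N)}$ is a probability measure, $\int |u_N|^{2+\epsilon}(t,x)\,\mu_t^{(N)}(dx) = \frac1N\sum_i |u_i^{(N)}(t)|^{2+\epsilon} \le (U + B_N T)^{2+\epsilon}$, which is bounded uniformly in $N$ and $t$ for any $\epsilon > 0$. Hence all conditions of Theorem~\ref{Main} hold and the continuity and momentum equations follow for $\widehat{\mu}_t$ and $\widehat{u}$.

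The heart of the matter is the vanishing of the interaction term, and here the choice of scale $\sigma_N$ does its work. The key observation is that, comparing the definition of $I_{\Phi}^{(N)}(t,\varphi)$ with the definition of $A_i^{(N)}(t)$ in the proof of Theorem~\ref{PicardThm}, one has the $\varphi$-weighted average of the accelerations
\begin{equation}
     I_{\Phi}^{(N)}(t,\varphi)
     =
     \frac1N \sum_{i=1}^N \varphi\bigl(t, x_i^{(N)}(t)\bigr)\, A_i^{(N)}(t).
\end{equation}
That proof establishes the pointwise bound $|A_i^{(N)}(t)| \le B_N$ for all $t \in [0,T]$: this is the content of \eqref{A<B} together with the monotonicity of the distances that rules out the critical time $t_N$. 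Therefore
\begin{equation}
     \bigl|{\frak I}_{\Phi}^{(N)}(\varphi)\bigr|
     \le
     \int_0^T \frac1N \sum_{i=1}^N \bigl|\varphi\bigl(t, x_i^{(N)}(t)\bigr)\bigr|\,\bigl|A_i^{(N)}(t)\bigr|\,dt
     \le
     T\,\|\varphi\|_\infty\, B_N,
\end{equation}
and since $B_N \to 0$ by \eqref{someB_N} we conclude ${\frak I}_{\Phi}(\varphi) = \lim_N {\frak I}_{\Phi}^{(N)}(\varphi) = 0$. Substituting this into \eqref{MomentumEqn} deletes the last term and produces \eqref{w/oPhi}.

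I expect the only real obstacle to be bookkeeping rather than substance. One must make sure the acceleration bound $|A_i^{(N)}(t)| \le B_N$ is genuinely available on \emph{all} of $[0,T]$, not merely up to some first critical time $t_N$; this is precisely what the contradiction argument in Theorem~\ref{PicardThm} guarantees, via the monotonicity of the pair distances and the monotonicity of $\Phi'$. One must also confirm that $B_N \to 0$, i.e.\ that boundedness of the positions (through \eqref{BolzanoWeierstrass}) forces the scale, and hence the interaction strength, to vanish in the limit. Once these two facts are in hand the estimate above is routine, and the uniform boundedness needed to invoke Theorem~\ref{Main} is inherited directly from Theorem~\ref{PicardThm}.
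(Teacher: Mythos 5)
Your proposal is correct and takes essentially the same route as the paper: it invokes the uniform bounds from Theorem \ref{PicardThm} together with the Proposition of Section \ref{bounded=>assumptionssatisfied} to verify the hypotheses of Theorem \ref{Main}, and then kills the interaction term by recognizing $I_{\Phi}^{(N)}(t,\varphi)$ as the $\varphi$-weighted average of the accelerations, bounded by $C_\varphi B_N$ with $B_N \to 0$ from \eqref{someB_N}. The only additions beyond the paper's own argument are welcome bits of bookkeeping (noting $\sup_N B_N < \infty$ so a single $B_T$ works, and spelling out the second hypothesis of Theorem \ref{Main}).
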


\begin{proof}
That bounded velocities and accelerations imply that the condition of Theorem \ref{Main} are satisfied was established in section \ref{bounded=>assumptionssatisfied}. 
Therefore the continuity and momentum equations are both satisfied. 

For the $\frak I_{\Phi}$ term first note that as $\displaystyle \left| \frac{d}{dt}u_i^{(N)}(t)\right|\leq B_N$,
\begin{equation}
\begin{split}
    \left |
    I_{\Phi}^{(N)}(t,\phi)
    \right|
     &=
     \left|
     -\frac1{N}
    \sum_i 
    \phi(t, x_i^{(N)}(t))
     \frac{d}{dt}u_i^{(N)}(t)
    \right|\\
    &\leq
    C_{\phi}B_N.
\end{split}
\end{equation}
Therefore, after using \eqref{someB_N}, obtain $\left | I_{\Phi}^{(N)}(t,\phi)\right |\to 0$, as $ N \to \infty$, for any $\phi$.
\end{proof}

Equation \eqref{w/oPhi} is identical to Maxwell's equation (76) in \cite{Max}, with $u$ and all averages now rigorously defined, when there are no external forces and all molecules are identical. 

Maxwell obtains his equation assuming: i) elastic, binary collisions, ii) $-\Phi'(r) = r^{-5}$, iii) molecular chaos, and iv) negligible interaction between molecules ``not in the same volume element," see \cite{Max} p. 70.
We have deduced the same equation here rigorously assuming  Hamiltonian dynamics. Maxwell's assumption on negligible interactions is here reflected by rescaling $\Phi$ by $\sigma_N$ satisfying 
\eqref{sigma_N}. 

We recall here that Maxwell also argues on how to approximate the hydrodynamic equation \eqref{w/oPhi} up to certain order so that it becomes the compressible Navier-Stokes in the case of laminar flows. In particular, \eqref{w/oPhi} contains information on the transport coefficients of the macroscopic system. This will be presented elsewhere.
}

\end{document}